\documentclass[conference]{IEEEtran}
\IEEEoverridecommandlockouts
\usepackage{cite}
\usepackage{amsmath,amssymb,amsfonts}
\usepackage{graphicx}
\usepackage{textcomp}
\usepackage{xcolor}
\usepackage{graphicx}
\usepackage{xspace}
\usepackage{amsfonts}
\usepackage{color}
\usepackage{mathtools}
\usepackage{booktabs}
\usepackage{multirow}
\usepackage{balance}
\usepackage{url}
\usepackage{listings}
\usepackage{enumitem}
\usepackage{latexsym}
\usepackage{makecell}
\usepackage{float}
\usepackage{threeparttable}
\usepackage{arydshln}
\usepackage{pifont}
\usepackage{subfig}
\usepackage{algorithm}  
\usepackage{algorithmicx}  
\usepackage{algpseudocode} 
\usepackage[capitalise]{cleveref}

\usepackage{xcolor}
\usepackage{enumitem}

\usepackage{amsthm}
\newtheorem{theorem}{Theorem}

\usepackage{xcolor}
\usepackage[switch]{lineno}

\newcommand{\ie}{\emph{i.e.,}\xspace}
\newcommand{\eg}{\emph{e.g.,}\xspace}

\newcommand{\btitle}[1]{\vspace{1ex}\noindent \textbf{#1}}
\newcommand{\eat}[1]{}

\newcommand{\bcirclednumber}[1]{\ding{\numexpr181+#1\relax}}

\def\BibTeX{{\rm B\kern-.05em{\sc i\kern-.025em b}\kern-.08em
    T\kern-.1667em\lower.7ex\hbox{E}\kern-.125emX}}

\begin{document}

\title{Bala-Join: An Adaptive Hash Join for Balancing Communication and Computation in Geo-Distributed SQL Databases}

\author{\IEEEauthorblockN{Wenlong Song}
\IEEEauthorblockA{\textit{School of Cyber Engineering} \\
\textit{Xidian University}\\
Xi'an, China \\
songwl@stu.xidian.edu.cn}
\and
\IEEEauthorblockN{Hui Li}
\IEEEauthorblockA{\textit{School of Computer Science and Technology} \\
\textit{Xidian University}\\
Xi'an, China \\
hli@xidian.edu.cn}
\and
\IEEEauthorblockN{Bingying Zhai}
\IEEEauthorblockA{\textit{School of Computer Science and Technology} \\
\textit{Xidian University}\\
Xi'an, China \\
zhaibingying@stu.xidian.edu.cn}
\and
\IEEEauthorblockN{Jinxin Yang}
\IEEEauthorblockA{\textit{School of Cyber Engineering} \\
\textit{Xidian University}\\
Xi'an, China \\
yangjx@stu.xidian.edu.cn}
\and
\IEEEauthorblockN{Pinghui Wang}
\IEEEauthorblockA{\textit{School of Cyber Science and Engineering} \\
\textit{Xi’an Jiaotong University}\\
Xi'an, China \\
phwang@mail.xjtu.edu.cn}
\and
\IEEEauthorblockN{Luming Sun}
\IEEEauthorblockA{\textit{Yunxi Technology Company Ltd.}\\
Shanghai, China \\
sunluming@inspur.com}
\and
\IEEEauthorblockN{Ming Li}
\IEEEauthorblockA{\textit{Shandong Inspur Database Technology Company Ltd.}\\
Jinan, China \\
liming2017@inspur.com}
\and
\IEEEauthorblockN{Jiangtao Cui}
\IEEEauthorblockA{\textit{School of Computer Science and Technology} \\
\textit{Xidian University}\\
Xi'an, China \\
cuijt@xidian.edu.cn}
}

\maketitle

\begin{abstract}
Shared-nothing geo-distributed SQL databases, such as CockroachDB, are increasingly vital for enterprise applications requiring data resilience and locality. However, we encountered significant performance degradation at the customer side, especially when their deployments span multiple data centers over a Wide Area Network (WAN). Our investigation identifies the bottleneck in the performance of the Distributed Hash Join (\textsf{Dist-HJ}) algorithm, which is contingent upon a crucial balance between communication overhead and computational load. This balance is severely disrupted when processing skewed data from real-world customer workloads, leading to the observed performance decline.

To tackle this challenge, we introduce Bala-Join, an adaptive solution to balance the computation and network load in Dist-HJ execution. Our approach consists of the Balanced Partition and Partial Replication (BPPR) algorithm and a distributed online skewed join key detector. The former achieves balanced redistribution of skewed data through a multicast mechanism to improve computational performance and reduce network overhead. The latter provides real-time skewed join key information tailored to BPPR. Furthermore, an Active-Signaling and Asynchronous-Pulling (ASAP) mechanism is incorporated to enable efficient, real-time synchronization between the detector and the redistribution process with minimal overhead. Empirical study shows that Bala-Join outperforms the popular \textsf{Dist-HJ} solutions, increasing throughput by 25\%-61\%.
\end{abstract}

\begin{IEEEkeywords}
distributed hash join, shared-nothing, data skew
\end{IEEEkeywords}

\section{Introduction}
\label{sec:intro}

Shared-nothing~\cite{stonebraker1986case} DBMSs such as CockroachDB \cite{taft2020cockroachdb}, TiDB \cite{huang2020tidb}, Spanner \cite{corbett2013spanner} and OceanBase\cite{yang2022oceanbase} are increasingly deployed to support geo-distributed applications that allocate data across various servers or data centers. In customer deployments built upon CockroachDB, our joint database research and development team with Shandong Inspur Database Technology Company Ltd. has encountered a significant challenge involving distributed query processing across wide-area networks (WANs). Specifically, for complex analytical queries involving distributed hash joins (Dist-HJ) that span data centers in different cities, the combination of real-world data skew and the high latency/limited bandwidth of the WAN creates a critical performance bottleneck.

This challenge is a direct consequence of the distributed hash join's execution model. As an essential method to implement the join operator, \textsf{Hash Join} relies on the hash table to efficiently match rows of two relations. The smaller relation is built into a hash table, while the bigger one is probed to match the tuples. At this point, the two relations are called the \textbf{build table}
and the \textbf{probe table}, respectively. \eat{In the distributed environment, both storage and computation differ significantly from the single-node environment. From the data \textit{storage} perspective, each table tends to be distributed across multiple data nodes; from a \textit{computational} aspect, the distributed environment comprises multiple compute nodes. Thus, the execution of distributed hash join needs to be redesigned. For convenience of description, we functionally abstract the nodes in the distributed environment into \textit{data}, \textit{compute}, and \textit{response} nodes. \textit{Data} nodes temporarily store the shards of the build and probe tables, which could be either original tables or intermediate results. \textit{Compute} nodes perform local \textsf{Hash Join} computation on the tuples acquired from the \textit{data} nodes. \textit{Response} nodes aggregate and return the result.} A general distributed \textsf{Hash Join} typically involves the following three steps:

\textbf{Redistribution}:
While the build and probe tables reside on various \textit{data} nodes, the tuples need to be redistributed to multiple \textit{compute} nodes for \textsf{Hash Join} tasks.

\textbf{Computation}:
Each \textit{compute} node performs local \textsf{Hash Join} computations on the received shares of tuples.

\textbf{Aggregation}:
\textit{Response} node aggregates the results of each \textit{ compute} node while there is no follow-up task, \eg another \textsf{Hash Join}. Otherwise, the \textit{compute} nodes subsequently transform into \textit{data} nodes, providing input for subsequent operations.

The default hash partitioning strategies used by the underlying CockroachDB foundation, while effective for local networks, often fail to adequately balance the trade-off between network communication load and overall query response time, especially in the presence of data skew. Consider a typical join query within the TPC-H benchmark shown below, the $customer$ table and the $orders$ table are joined on the customer key (\textit{CUSTKEY}). Orders from the same customer are distributed to the same \textit{compute} node. If some customers hold a substantial number of orders, \ie data skew, the corresponding compute nodes will undertake much more computational tasks than the other nodes. 

\begin{lstlisting}[language=SQL]
SELECT COUNT(*) 
FROM CUSTOMER JOIN ORDERS 
ON O_CUSTKEY = C_CUSTKEY;
\end{lstlisting}
\cref{fig:impact} provides a real example executing the above query in CockroachDB after introducing a 10\% data skew in the join between the $customer$ and $orders$ tables. The execution times of the three compute nodes (HashJoiners) are \textit{72ms, \textbf{9.1s}, and 69ms}, respectively. The data skew results in an unbalanced distribution of computational load across nodes, with one node bearing significantly higher computational overhead, thereby becoming the performance bottleneck of the overall query execution.

Extensive studies~\cite{dewitt1992practical,metwally2005efficient,xu2008handling,barthels2015rack,rodiger2016flow} have justified the aforementioned problem as a universal phenomenon in practice. Several efforts~\cite{xu2008handling,polychroniou2014track,rodiger2016flow, yang2023one} have been conducted to address this issue. The common idea is to employ different redistribution strategies for skewed and non-skewed data, respectively. This relies on the identification of skewed data, either through pre-collected statistics or online detection. Additionally, large-scale data transmission across nodes over a high-latency, limited-bandwidth WAN can lead to performance degradation\cite{pu2015low,pradhan2024optimal}, particularly when using the broadcast strategy commonly employed by existing solutions. Despite their promising progress in alleviating the load imbalance, the overall performance is still subject to the original data distribution across nodes~\cite{xu2008handling} or favorable network conditions~\cite{yang2023one}. For instance, there exist many queries joining tables spanned over Beijing and Shanghai (with a distance larger than 1000$km$) in Inspur group, a giant Chinese company providing high-performance computational infrastructure. We observe that in these cross-region joins, network overhead remains a critical factor for the join performance even if we replace the default \textsf{Dist-HJ} implementation with a series of alternatives tailored to load balance~\cite{xu2008handling,polychroniou2014track,rodiger2016flow, yang2023one}. The reason is that these solutions (\ie PRPD~\cite{xu2008handling} and PnR~\cite{yang2023one}) typically focus on either network overhead or balance, thus limiting their overall performance. 

\begin{figure}
\centering
\includegraphics[width=\linewidth]{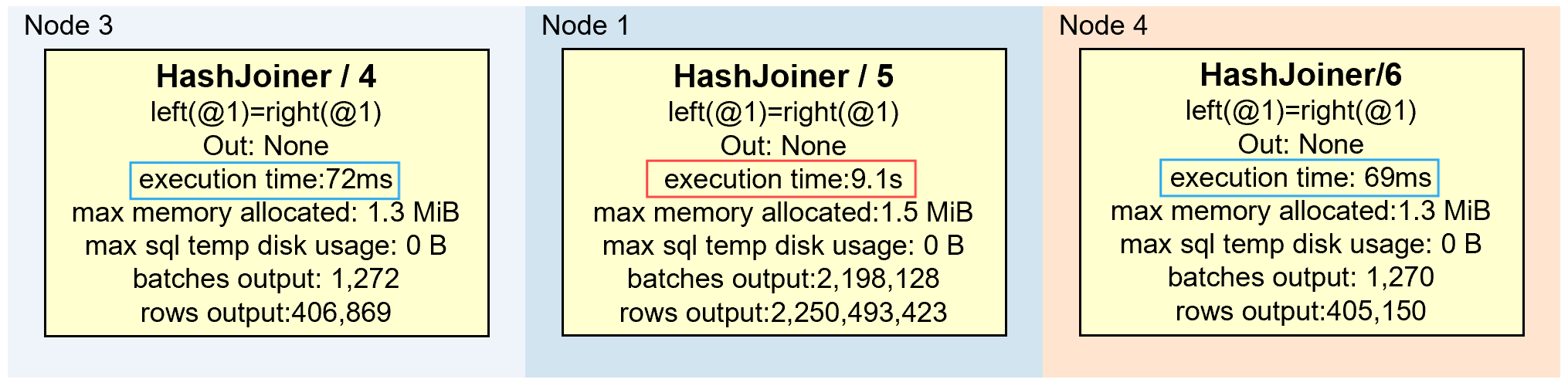}
\vspace{-1ex}\caption{Impact of data skew on distributed hash join}
\label{fig:impact}\vspace{-1ex}
\end{figure}

In this paper, we propose Bala-Join, which consists of a novel redistribution strategy and an online skewed join key detector. The former is adaptable to various skew scenarios and consistently performs well, while the latter can identify skewed join keys within intermediate join tables at runtime. In summary, our contributions are as follows.
\begin{itemize}
    \item We present a novel redistribution strategy (BPPR) for \textsf{Dist-HJ}, with theoretically guaranteed balance. By introducing a balance factor and a multicast mechanism, it ensures a minimized network overhead with guaranteed balance load (Section \ref{sec:bala-join}), 
    \item We present a distributed mechanism for skew detection, while minimizing additional overhead. The detection mechanism is further deeply integrated with BPPR. (Section \ref{sec:det}),
    \item Empirical study demonstrates the performance of our solution in balancing the load and network overhead. (Section \ref{sec:exp}).
\end{itemize}

\section{RELATED WORKS}
\label{sec:rel}

\subsection{Skewed data detection}
Skewed data detection strategies generally fall into two paradigms: static and dynamic. 
Static detection is typically conducted during the query planning phase. It relies on pre-computed statistics collected via analyzing histograms or sampling column values~\cite{dewitt1992practical,kitsuregawa1990bucket,polychroniou2014track,vitorovic2016load}. However, static approaches are rendered ineffective when handling intermediate results, where data distributions are unknown prior to execution.

Dynamic detection, in contrast, operates at runtime to identify frequent items from data streams with unpredictable distributions. This problem is formally equivalent to frequent item-set mining. Existing dynamic algorithms can be broadly categorized into: 1) \textbf{Sampling-based algorithms} \cite{gibbons1998new,manku2002approximate,demaine2002frequency}, which estimate frequency by inspecting a subset of the data stream; 2) \textbf{Counter-based algorithms} \cite{manku2002approximate,metwally2005efficient}, such as Space Saving, which utilize deterministic counters to track item occurrences; 3) \textbf{Sketch-based algorithms} \cite{charikar2004finding,cormode2005improved}, such as Count-Min Sketch\cite{cormode2005improved}, which employ probabilistic data structures for frequency estimation.

Several efforts have also been made in distributed dynamic detection. Solutions such as Parallel Space Saving \cite{cafaro2016parallel} and Double-Anonymous sketch \cite{zhao2023double} aggregate local estimates to form a global view. However, simplistic aggregation often necessitates multiple passes over the data or incurs high synchronization overhead, a challenge Bala-Join addresses via its single-pass ASAP mechanism.

\subsection{Joins over skewed data}
Data skew in distributed joins manifests in various forms, including the Tuple Placement Skew (TPS), the Selectivity Skew (SS), the Redistribution Skew (RS) and the Join Product Skew (JPS)\cite{walton1991taxonomy,dewitt1992practical}. While TPS is often mitigated by hash partitioning and SS depends heavily on query selectivity, RS and JPS remain the primary bottlenecks in parallel join processing\cite{xu2008handling,rodiger2016flow}. Existing mitigation strategies can be classified into three distinct approaches:

\textbf{Task scheduling approaches} formulate skew handling as a traditional task scheduling problem. In \textsf{Dist-HJ}, the join tables are divided into smaller partitions (\eg range partitions \cite{dewitt1992practical} or hash buckets \cite{kitsuregawa1990bucket,hua1991handling,alsabti2001skew,wolf1991effective,wolf1993parallel,wolf1994new,dewan1994predictive}) and assigned to processing units. Since task scheduling problems are NP-complete, various well-known heuristic algorithms (\eg LPT \cite{graham1969bounds} and MULTIFIT \cite{coffman1978application}) have been employed in such algorithms \cite{dewitt1992practical,kitsuregawa1990bucket,hua1991handling,alsabti2001skew,wolf1991effective,wolf1993parallel,wolf1994new}. However, these approaches typically rely on static cost estimates derived from tuple counts, often failing to account for dynamic runtime variances.

\textbf{Runtime adaptive scheduling approaches} dynamically adjust workload distribution during execution to handle load imbalances~\cite{dewitt1992practical,alsabti2001skew,dewan1994predictive,shatdal1993using,kitsuregawa1995dynamic,zhou1995handling,zhou2019fastjoin}. 
Instead of predicting execution time in advance, these methods adapt in real-time by shifting tasks from overloaded nodes to underutilized ones.
Common scheduling strategies include Virtual Processor Scheduling (VPS)~\cite{dewitt1992practical}, work-stealing via shared virtual memory~\cite{shatdal1993using}, and multi-stage scheduling managed by a central processor~\cite{zhou1995handling}.
To guide these decisions, approaches rely on various cost metrics, such as estimated join costs~\cite{dewitt1992practical}, output and work functions~\cite{alsabti2001skew}, or real-time processing speeds~\cite{kitsuregawa1995dynamic}.
Additionally, some solutions are tailored for heterogeneous configurations~\cite{dewan1994predictive} or stream processing environments, such as FastJoin~\cite{zhou2019fastjoin}.
Despite their adaptability, these methods often incur significant state migration overhead.

\textbf{Skew-aware partitioning approaches}, most relevant to our work, optimize physical data placement to mitigate computational imbalance. 
PRPD~\cite{xu2008handling} employs a strategy where skewed tuples remain local while the corresponding partition of the opposing table is broadcast, relying on the assumption of random data distribution. 
Flow-Join~\cite{rodiger2016flow} extends this by applying the Symmetric Fragment Replicate (SFR) strategy to handle bilateral skew, demonstrating efficacy in high-bandwidth environments. 
Other contributions focus on specific optimization goals: Track Join~\cite{polychroniou2014track} minimizes traffic by optimizing transmission plans for unique values, while FGSD~\cite{gavagsaz2019load} provides fine-grained load balancing specifically for MapReduce workflows. 
More recently, PnR~\cite{yang2023one} shifts the paradigm from minimizing network transfer to achieving near-perfect computational balance, prioritizing CPU efficiency in high-speed networks. 
While these strategies have made significant strides, they often trade network efficiency for load balance (\eg PnR) or rely on the assumption of uniform skew distribution (\eg PRPD). 
In contrast, Bala-Join is designed to decouple load balancing from the original data distribution while minimizing the network overhead associated with broadcasting.

\section{PRELIMINARY and MOTIVATION}
\label{sec:pre}

\subsection{Shared-nothing DBMSs}
The shared-nothing architecture is widely adopted for distributed DBMSs, where each node operates independently with standalone CPU, memory, and storage.
In these DBMSs, \textsf{Dist-HJ} is executed in parallel. Since data shards reside on different nodes, tuples with the same join key need to be redistributed to the same compute node. Each compute node then performs a local hash join, after which the results are aggregated by the response node. Although hash partitioning is the most commonly used redistribution strategy in practice, it can lead to significant imbalance under data skew. In such cases, newly added nodes cannot share the skewed data from the hot nodes, thus limiting the scalability.
For instance, as a representative shared-nothing DBMS, CockroachDB~\cite{taft2020cockroachdb} organizes data into ranges that are distributed and replicated throughout the cluster. When executing \textsf{Dist-HJ}, CockroachDB dynamically selects range replicas to generate the physical execution plan. Thus, the participating nodes for a given join query are not predetermined, and the volume of data contributed by each node may vary substantially. Moreover, the default \textsf{Dist-HJ} in CockroachDB employs hash partitioning as the exclusive redistribution strategy. Therefore, when data skew occurs, it can cause a load imbalance across the cluster.


\subsection{Distributed Hash Join}

\begin{figure*}[htbp]
\centering
\subfloat[PRPD algorithm]{
    \includegraphics[width=0.26\linewidth]{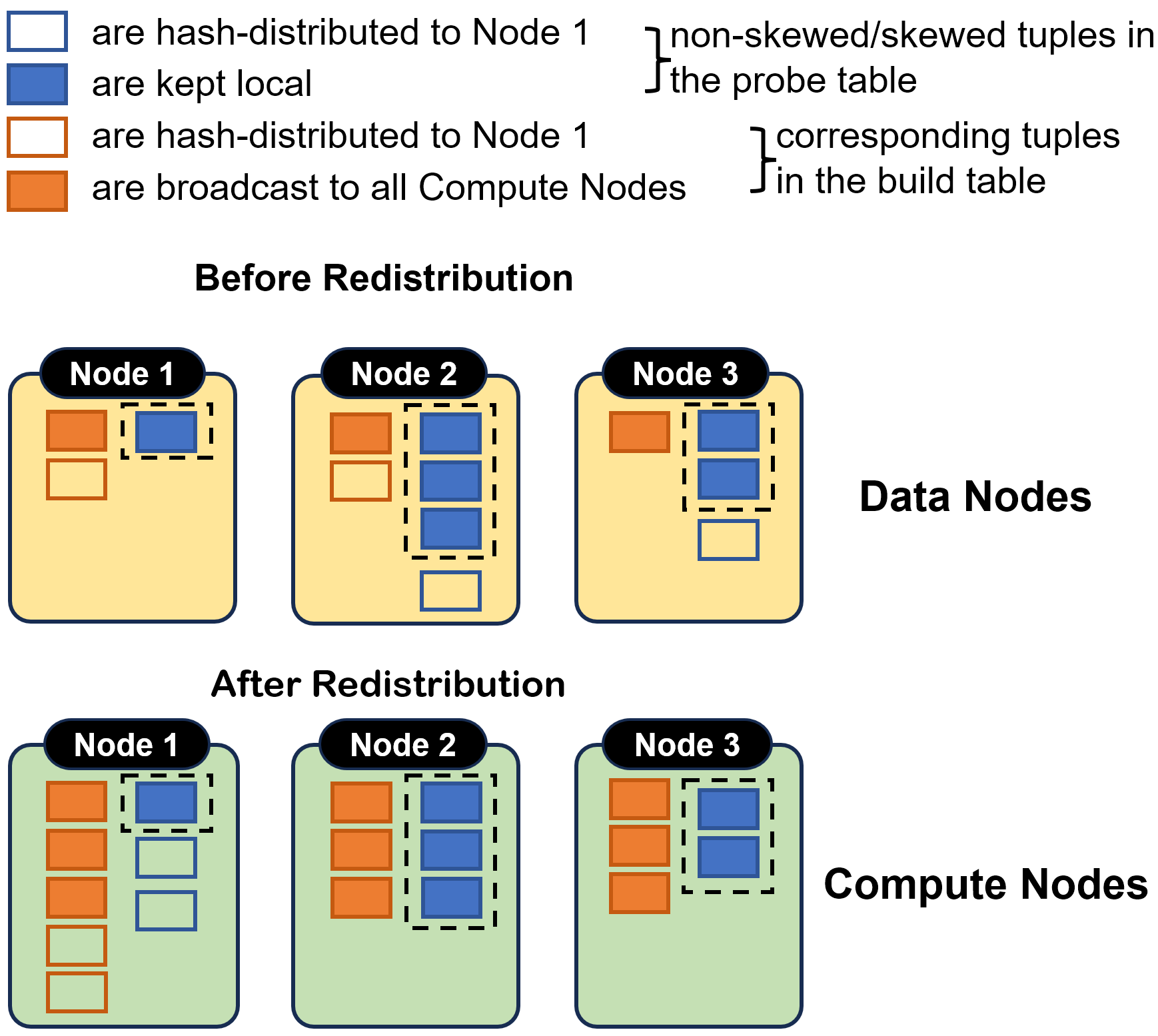}
    \label{fig:PRPD}
}
\subfloat[SFR algorithm]{
    \includegraphics[width=0.42\linewidth]{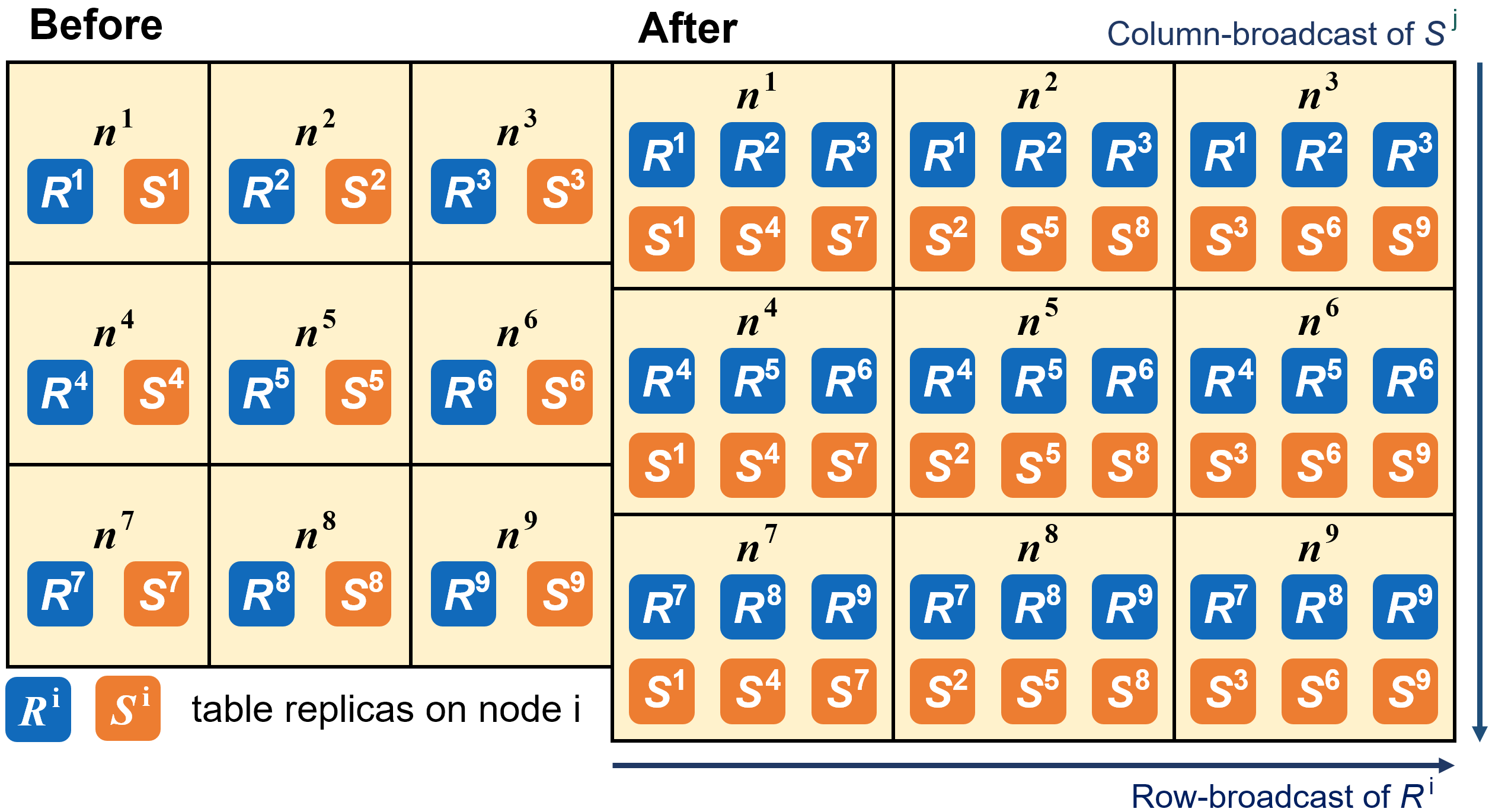}
    \label{fig:SFR}
}
\subfloat[PnR algorithm]{
    \includegraphics[width=0.265\linewidth]{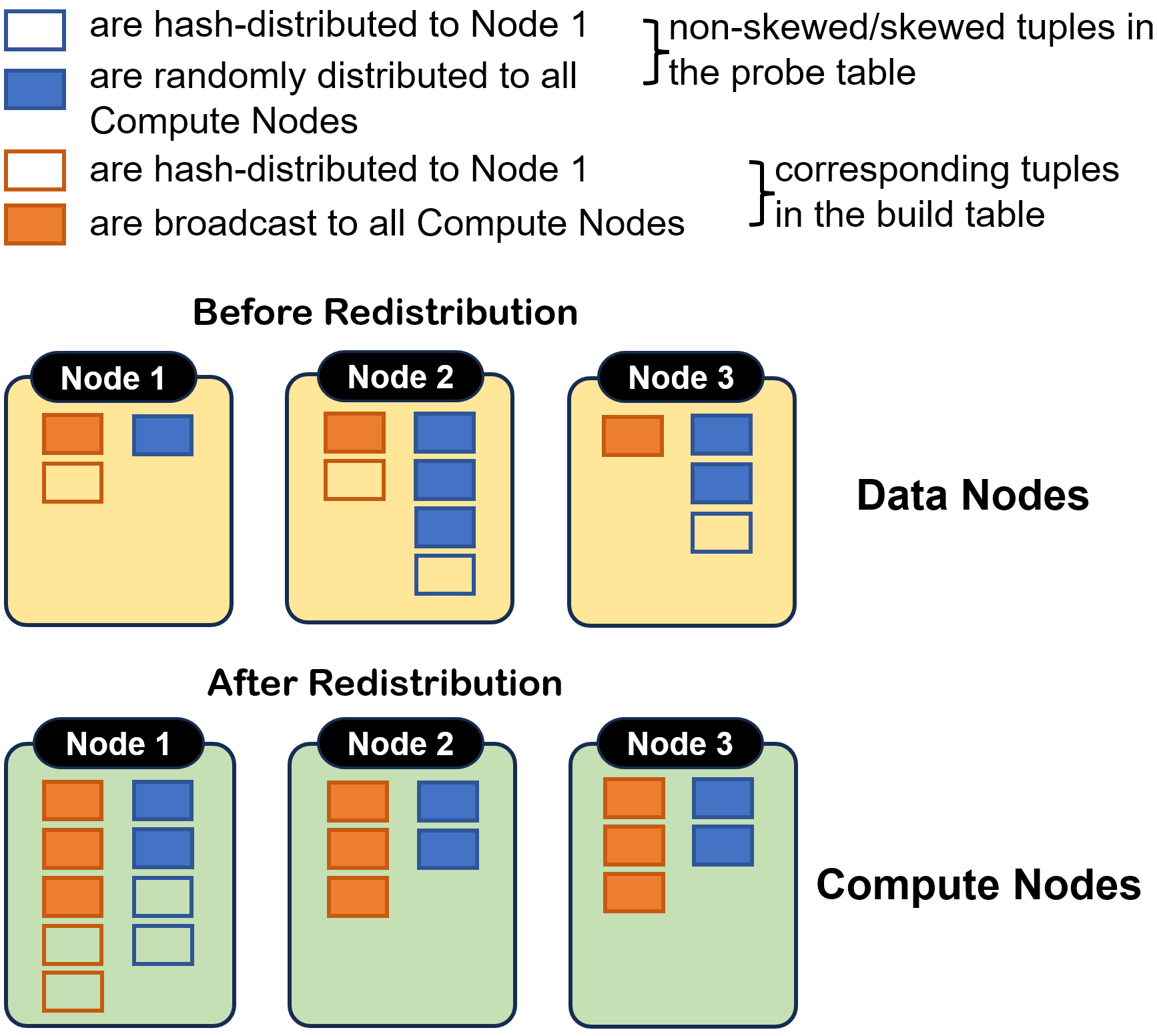}
    \label{fig:PnR}
}
\caption{Redistribution strategies overview}
\label{fig:overview}\vspace{-1ex}
\end{figure*}
This part explores the implementation of distributed hash join, focusing on several essential algorithms that optimize performance under different data distribution scenarios.

\textbf{Grace Hash Join}: As the standard implementation for shared-nothing architectures, \textsf{GraHJ}~\cite{kitsuregawa1983application} partitions both the build table $R$ and probe table $S$ across nodes using a consistent hash function. This redistribution ensures that tuples with identical join keys are co-located on the same processing unit. Consequently, the global join operation is decomposed into a set of independent, parallel local hash joins, the union of which constitutes the final result.

\textbf{PRPD}: \textsf{GraHJ} can experience degradation of efficiency under data skew. PRPD deals with skewed tuples and non-skewed tuples individually: the former are kept local while the latter follow a hash partitioning strategy. The corresponding tuples with skewed join keys in the build table would be broadcast to all nodes in the cluster. PRPD is designed to minimize network overhead and achieve a balanced load. Its optimal performance is based on the assumption that the skewed tuples are distributed uniformly across the nodes. However, due to the unpredictable distribution of skewed tuples, it is difficult to consistently achieve the expected favorable performance.

\eat
{
\cref{PRPD} illustrates the core idea of the PRPD algorithm. In the given example, $node\ 0$ and $node\ 1$ act as both \textit{data} nodes and \textit{compute} nodes. $S$ contains 10 tuples, among which the tuples of join key `1' are skewed. The skewed tuples in $S$ in both $node\ 0$ and $node\ 1$ are retained locally, while the skewed tuples of the join key `1' of table $R$ are replicated in both $node\ 0$ and $node\ 1$. The rest tuples in $R$ and $S$ are redistributed according to \textsf{GraHJ}.
}

\textbf{SFR}: SFR is the redistribution strategy adopted by Flow-Join~\cite{rodiger2016flow}. For tuples that are skewed only in the build table or the probe table, Flow-Join adopts the PRPD strategy. For tuples that are skewed in both tables, Flow-Join employs the Symmetric Fragment Replicate (SFR) strategy, where the nodes in the cluster are logically arranged in a matrix. The tuples from the build table are replicated across the matrix horizontally, while the corresponding tuples from the probe table are replicated to nodes vertically.

\textbf{PnR}: PnR~\cite{yang2023one} behaves similarly to Flow-Join, but selects to distribute the skewed tuples randomly or in a round-robin manner across all nodes in the cluster. The corresponding tuples from the probe table are broadcast across the cluster.

\cref{fig:overview} shows an overview of the redistribution strategies designed for data skew. In the following, we shall discuss the applicable scenarios and limitations of these approaches, thereby clarifying our objectives.

\subsection{Observations and Targets}
\label{subsec:observations}
PRPD and SFR focus on optimizing network overhead, while PnR emphasizes load balance. Both groups have their advantages in specific scenarios, but may suffer from limitations in more generalized situations.

As PRPD keeps the skewed tuples local, some nodes still bear a higher load than others when the original distribution of the skewed tuples is not uniform across nodes. Similarly, in SFR, when a row or column in the matrix contains many more skewed tuples than others, the nodes on that row or column will be overloaded. That is, PRPD and SFR are not suitable for scenarios where the original distribution of skewed tuples is excessively imbalanced.

\eat{
\cref{PRPD-limit} shows how PRPD performs under different data distributions. Here, the vertical bars indicate the count of specific join keys in the probe and build tables, which showcase three types of data distribution scenarios. Clearly, when the difference in volume between the build table and the probe is not too large (\eg of identical magnitude), it results in substantial network overhead.
}

PnR achieves balanced redistribution across various scenarios at the cost of massive network overhead. Nodes with more skewed tuples bear a heavier load during the redistribution phase. Furthermore, PnR suffers from the same problem as PRPD, especially when the volumes of both tables are nearly identical.

PRPD, Flow-Join and PnR all suffer from limitations in either the computation or the redistribution phase. In summary, previous strategies face the following challenges:

\textbf{Chal. \bcirclednumber{1}} \textit{Dependence on the original distribution}: Any strategy keeping skewed tuples local would experience performance degradation when the original distribution of skewed tuples is non-uniform across nodes.

\textbf{Chal. \bcirclednumber{2}} \textit{Statistical dependencies}: Previous approaches require the identification of skewed tuples. Moreover, it is also essential for Flow-Join and PnR to determine the number of skewed tuples in both the build and the probe table. Such statistics are unavailable in streaming scenarios or intermediate results, rendering these methods inapplicable.

\textbf{Chal. \bcirclednumber{3}} \textit{Non-adaptive strategies}: Given a tuple, once the necessary statistical information is acquired, the redistribution will follow a specific strategy. Therefore, a dynamic load balance within the cluster is not achievable. In streaming scenarios, tuples determined to be non-skewed at the beginning may eventually become skewed as the data stream progresses. Therefore, the correctness of the distributed join is challenged.

In order to address the challenges above, we present Bala-Join, which dynamically detects and deals with skewed tuples. Bala-Join immediately distributes tuples upon arrival, without waiting for the complete statistical information. Bala-Join introduces a novel skew-aware distribution strategy, namely BPPR, which ensures dynamic join load balance in the cluster that is independent of the original data distribution, mitigating \textbf{Chal. \bcirclednumber{1}}. Tuples with the same join key in both tables, whether determined as skewed or not, will always be distributed to the same subset of nodes, which resolves the consistency issue in dynamic adaptation raised in \textbf{Chal. \bcirclednumber{3}}. Finally, the integration of BPPR with the distributed detector and the ASAP mechanism achieves real-time detection and distribution, significantly improving the performance of the distributed hash join and addressing \textbf{Chal. \bcirclednumber{2}}.

\section{BALA-JOIN}
\label{sec:bala-join}
{

\begin{table}[t]
  \caption{Notations}
  \label{tab:notations}
  \vspace{0ex}
  \footnotesize
  \renewcommand{\arraystretch}{1.25} 
  \begin{tabular}{c p{6cm}} 
    \toprule
    \textbf{Symbol} & \textbf{Description}\\
    \midrule
    $R, S$ & Build table and probe table\\
    $R^i, S^i$ & Local data shards of $R$ and $S$ initially located at node $i$ (data streams)\\
    $S_{skew}, S_{non}$ & Sets of skewed and non-skewed tuples in $S$\\
    $R_{skew}, R_{non}$ & Sets of build tuples whose keys match those in $S_{skew}$ and $S_{non}$\\
    $S_{skew}^i, S_{non}^i$ & Local subsets of skewed/non-skewed probe tuples at node $i$\\
    $R_{skew}^i, R_{non}^i$ & Local subsets of build tuples at node $i$\\
    $U(x)$ & Set of candidate destination compute nodes for skewed key $x$\\
    $C(t)$ & Selection function that routes a skewed tuple $t$ to a node in $U(t.key)$\\
    $\widehat{S}_{skew}^{j}$ & Total volume of skewed probe tuples received by compute node $j$\\
    $\widehat{S}_{skew}^{ij}$ & Total volume of skewed tuples from source node $i$ distributed to target node $j$\\
    $B$ & Balance factor indicating load disparity (Eq.~\ref{equation:B})\\
    $\epsilon$ & User-defined threshold for load imbalance ($B \le \epsilon$)\\
    \bottomrule
  \end{tabular}
\end{table}

\subsection{System Overview}
\label{sub:overview}

\begin{figure}
\centering
\includegraphics[width=\linewidth]{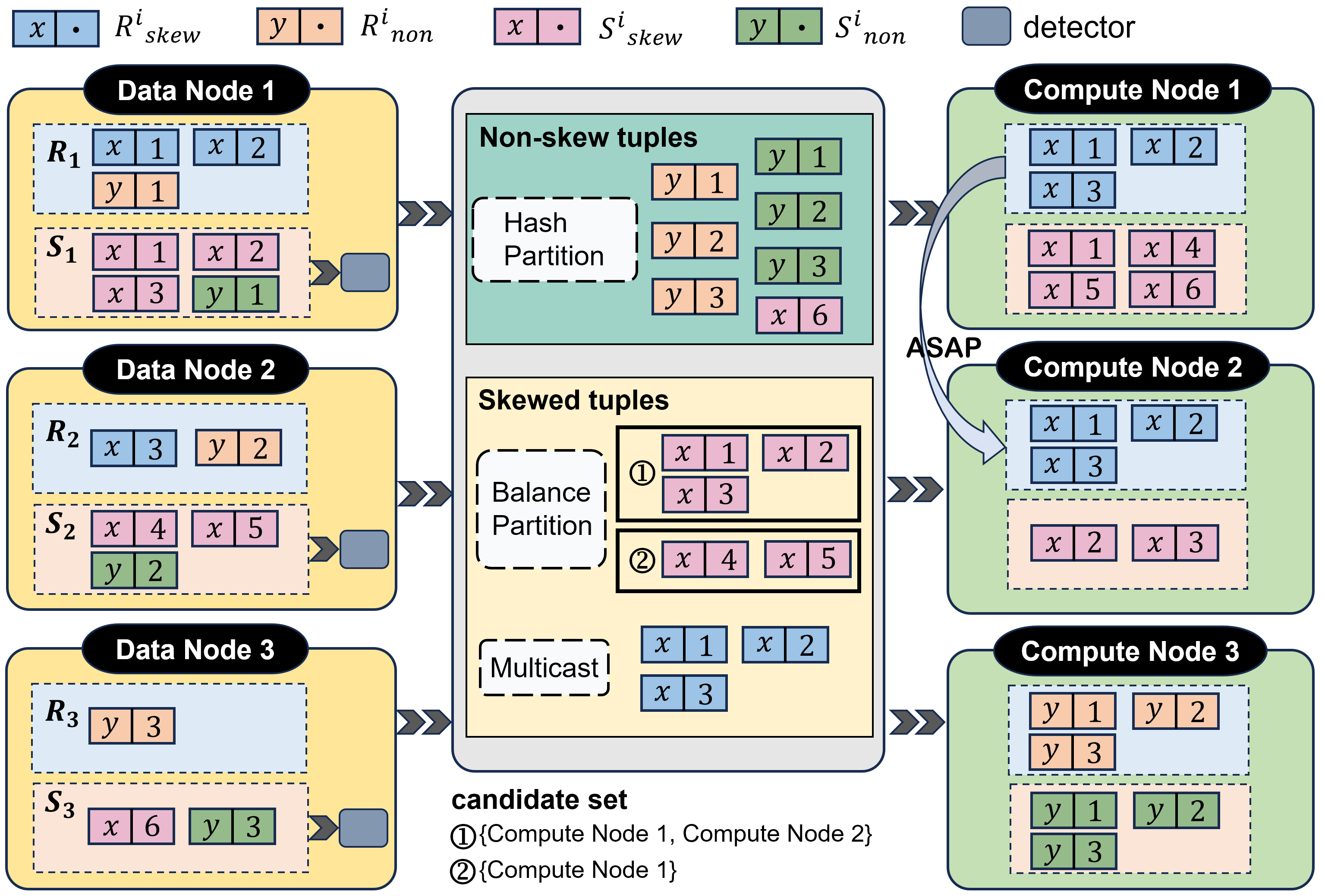}
\vspace{0ex}\caption{Overview of the Bala-Join}
\label{fig:Bala-Join}\vspace{0ex}
\end{figure}
Bala-Join targets distributed hash joins in geo-distributed databases, and focuses on two major costs: cross-region communication overhead and computation cost caused by load imbalance.

Bala-Join consists of a novel redistribution strategy, BPPR, and an online skewed join key detector. The build and probe tables participate in the join as data shards or data streams across data nodes. Upon the arrival of probe tuples, the distributed detectors deployed on the nodes check each tuple, determine whether it is skewed, and then redistribute it following BPPR.

To ensure the tuples are correctly joined, when a tuple from one table is distributed to a certain node, all tuples with the same join key in another table must also be distributed to that node. Since the probe table is typically much larger than the build table, balancing the skewed tuples from the build table would necessitate multicasting or broadcasting the massive probe table tuples. Furthermore, we focus on scenarios involving intermediate results or data streams, where statistics are unavailable to \bcirclednumber{1} compare the volume of skewed build tuples against their corresponding probe tuples and \bcirclednumber{2} decide how to deal with them (a strategy used by PnR). Consequently, we exclusively investigate the probe side.

In BPPR, each data node maintains an expandable target node set for each join key of the skewed tuples. Newly arrived skewed tuples are distributed to proper nodes in the sets so as to satisfy a predefined balance factor, which shall be described in detail in~\cref{sec:Balanced Partition Algorithm}. When none of the existing nodes are adequate, the node set is expanded. The tuples with the same join key in the build table will soon be multicast to the node set afterwards.

\noindent\underline{\textit{Example:}} \cref{fig:Bala-Join} shows an overview of Bala-Join. Data nodes continuously receive streams of build and probe tuples, where detectors on each node dynamically identify skewed probe tuples in real-time. We use subscripts $p$ and $b$ to denote tuples from the probe and build tables, respectively. Non-skewed tuples (\eg ${\langle x, 6 \rangle}_{p}$, ${\langle y, 1 \rangle}_{p} \sim {\langle y, 3 \rangle}_{p}$) are hash-partitioned to their corresponding compute nodes (\eg Compute Nodes 1 and 3), a strategy also applied by default to the build table. 
In contrast, skewed probe tuples are distributed to specific target node sets based on their join keys; for instance, tuples with key $x$ from Data Node 1 (\ie ${\langle x, 1 \rangle}_{p} \sim {\langle x, 3 \rangle}_{p}$) are distributed to \{Compute Node 1, Compute Node 2\}, while those from Data Node 2 (\ie ${\langle x, 4 \rangle}_{p}, {\langle x, 5 \rangle}_{p}$) are sent to \{Compute Node 1\}. 
To ensure correctness, the corresponding build tuples (\ie ${\langle x, 1 \rangle}_{b} \sim {\langle x, 3 \rangle}_{b}$) are multicast to all these destination nodes (\ie the union set \{Compute Node 1, Compute Node 2\}). 
Since build tuples are inherently unaware of the probe tuples' destinations, this matching is facilitated by the ASAP mechanism described in \cref{sec:det}. 
Furthermore, even if the same join key (\eg $x$) is inconsistently identified as skewed or non-skewed across different nodes, the target set update algorithm guarantees that all associated tuples are invariably distributed to the same consistent set of nodes.

\subsection{Notations}
\label{sec:notation}
Generally, we consider a distributed hash join operation $R \bowtie S$ on a $n$-node shared-nothing cluster, where the build table $R$ and probe table $S$ are initially partitioned into shards $R^i$ and $S^i$ at node $i$ ($0 \le i < n$). This abstraction treats inputs as distributed data streams, regardless of whether they are original tables or intermediate results.

We identify a tuple in the probe table $S$ as \textit{skewed} if its join key frequency exceeds a predefined threshold $\theta$. 
Based on this, $S$ is partitioned into the set of skewed tuples, denoted as $S_{skew}$, and the set of non-skewed tuples, $S_{non}$.
Correspondingly, the build table $R$ is logically divided into $R_{skew}$ and $R_{non}$, where $R_{skew}$ comprises tuples whose join keys appear in $S_{skew}$ (regardless of their intrinsic frequency in $R$).
On each node $i$, the local shards are further partitioned into four distinct subsets: $S_{skew}^i$, $S_{non}^i$, $R_{skew}^i$, and $R_{non}^i$.

\subsection{Balanced Partition and Partial Replication}
\label{subsec:bppr}
\vspace{.5ex}\noindent\textbf{BPPR Workflow.}

BPPR adheres to the general processing workflow of \textsf{Dist-HJ}, but introduces an additional tuple-level partitioning phase designed to further subdivide skewed tuples into more fine-grained partitions, to improve load balance. Assisted by the real-time skew detector (detailed in \cref{sec:det}), the input streams are dynamically partitioned and processed. For a join operation of the form ${R} \mathop{\bowtie}\limits_{R.a = S.b} {S}$, the general procedure is executed as follows.

\btitle{Step 1, Redistribution.} Given the above, $ R^i $ and $ S^i $ are divided into four parts: $ S_{non}^i $, $ S_{skew}^i $, $ R_{non}^i $, and $ R_{skew}^i $, respectively. These parts are further partitioned and redistributed as follows:

As defined in ~\cref{sec:notation}, the local shards $S^i$ and $R^i$ are logically divided into four subsets, which are distinguished by different colors in the Data Nodes of \cref{fig:Bala-Join}. These subsets are further partitioned and redistributed as follows:
\begin{itemize}
    \item $S_{non}^i$: Tuples in this subset are hash-partitioned and distributed, achieving nearly even distribution across $ n $ nodes.
    
     \underline{\textit{Example:}} ${\langle x, 6 \rangle}_{p}$, ${\langle y, 1 \rangle}_{p} \sim {\langle y, 3 \rangle}_{p}$ are distributed to Compute Node 1 and Compute Node 2, respectively. Note that although $x$ is a skewed key in $S$, the detector at Data Node 3 identifies it as non-skewed since it only monitors the local stream $S_3$. The BPPR and ASAP mechanisms shall guarantee the correctness of the join result even in such scenarios.

    \item $S_{skew}^i$: Each tuple $t\in S_{skew}^i $ is routed to a target node, determined by a selection function $C(t)$ within a candidate set $U(x)$. To quantify the degree of balance, we define the \textit{balance factor} $B$:
    \begin{equation}
    \label{equation:B}
	   B = ({\mathop{max}\limits_{j} |{\widehat{S}_{skew}}^{j}|} - {\mathop{min}\limits_{j} |{\widehat{S}_{skew}}^{j}|})/{\mathop{max}\limits_{j} |{\widehat{S}_{skew}^{j}}|}
    \end{equation}

    where ${\widehat{S}_{skew}}^{j}$ denotes the volume of skewed tuples received by \textit{compute} node $j$.
    
    \underline{\textit{Example:}} As \cref{fig:Bala-Join} shows, Compute Node 1 receives the maximum number of skewed tuples (4), while Compute Node 3 receives zero. Consequently, the global balance factor is calculated as $B = (4-0)/4 = 1$. Note that even though only $x$ is shown as a skewed key in the diagram, $\widehat{S}_{skew}^{j}$ accounts for \textit{all} skewed keys, rather than a specific key such as $x$.
    
    Naturally, a balanced redistribution of skewed tuples should satisfy $B\le\epsilon$, where $\epsilon$ is a predefined threshold. For each skewed tuple at arrival, $C(t)$ attempts to select a target node from the set of nodes and test whether it is balanced enough. If none of the nodes in $U(x)$ meets the requirement, $U(x)$ is expanded to include additional nodes for selection (shall be discussed later in \cref{algorithm:BalancedPartition}).
    
    \underline{\textit{Example:}} ${\langle x, 1 \rangle}_{p} \sim {\langle x, 3 \rangle}_{p}$ and ${\langle y, 4 \rangle}_{p} \sim {\langle y, 5 \rangle}_{p}$ are distributed to the set $U(x)=$ \{Compute Node 1, Compute Node 2\}. Note that $U(x)$ is effectively the union of $U^1(x)$ and $U^2(x)$ maintained by Data Nodes 1 and 2, a mechanism that will be detailed in \cref{sec:Balanced Partition Algorithm}.

    \item $ R^{i}_{skew} $: To ensure correctness, each tuple in $ R_{skew} $ with the join key $\boldsymbol{x}$ must be multicast to the nodes specified by $ U(x) $, which we refer to as \textit{Partial Replication}.

    \underline{\textit{Example:}} ${\langle x, 1 \rangle}_{b} \sim {\langle x, 3 \rangle}_{b}$ are multicast to \{Compute Node 1, Compute Node 2\} since ${\langle x, 1 \rangle}_{p} \sim {\langle x, 3 \rangle}_{p}$ are distributed to that set.

    \item $R_{non}^i$: Similar to $S_{non}^i$, these tuples are distributed via standard hash partitioning.

    \underline{\textit{Example:}} ${\langle y, 1 \rangle}_{b} \sim {\langle y, 3 \rangle}_{b}$ are distributed to Compute Node 3 to match ${\langle y, 1 \rangle}_{p} \sim {\langle y, 3 \rangle}_{p}$.
\end{itemize}

\eat{During the computation phase, each compute node receives the redistributed tuples from all data nodes according to \textbf{Step 1}. Consequently, for a compute node $i$, its inputs can be viewed as the union of four sets. For the non-skewed portion, it receives the hash-distributed probe tuples $\mathcal{S}^i_{hash}$ from $S_{non}$ satisfying $(hash(t.b)\bmod n)=i$, and the hash-distributed build tuples $\mathcal{R}^i_{hash}$ from $R_{non}$ satisfying $(hash(t.a)\bmod n)=i$. For the skewed portion, it receives the balanced-distributed skewed probe tuples $\mathcal{S}^i_{bal}$ from $S_{skew}$ with targets satisfying $C(t)=i$, and the multicast build tuples $\mathcal{R}^i_{bal}$ from $R_{skew}$ satisfying $i\in U(t.a)$. After receiving these inputs, node $i$ performs two local \textsf{Hash Join}s, namely $\mathcal{R}^i_{bal}\bowtie_{a=b} \mathcal{S}^i_{bal}$ and $\mathcal{R}^i_{hash}\bowtie_{a=b} \mathcal{S}^i_{hash}$, and then unions the two outputs as its local result.}

\btitle{Step 2, Computation.} Each compute node processes the tuples received from the redistribution step (\eg Compute Node 2 processes the skewed probe tuple ${\langle x, 2 \rangle}_{p} $ and $ {\langle x, 3 \rangle}_{p}$ against the multicast build tuple ${\langle x, 1 \rangle}_{b} \sim {\langle x, 3 \rangle}_{b}$). Node $j$ performs two independent local hash joins on the skewed and non-skewed streams: $R_{non}^j \bowtie S_{non}^j$ and $R_{skew}^j \bowtie S_{skew}^j$.

\btitle{Step 3, Union.} Upon completion of the computation step, each compute node produces a set of local partial results. 
Formally, the global outcome is derived by aggregating the outputs from all $n$ compute nodes. 
Since the probe tuples are disjointly partitioned (either via hashing or the $C(t)$ function), the global join result is simply the union of these local results:
\begin{equation}
    {R} \mathop{\bowtie} {S} = \bigcup^{n-1}_{j=0} \, \left( (R_{non}^j \bowtie S_{non}^j) \cup (R_{skew}^j \bowtie S_{skew}^j) \right)
\end{equation}
where $j$ represents the index of a compute node.

In contrast to strategies that enforce uniform distribution for \textit{every} skewed key across \textit{all} nodes (\eg PRPD and PnR), BPPR optimizes load balancing at the \textit{cluster level}. 
For instance, PnR scatters tuples with key $x$ to all $n$ nodes to minimize local skew, incurring high multicast overhead. 
BPPR, conversely, restricts the distribution of $x$ to a minimal subset $U(x)$. 
Consequently, although the load for a specific key $x$ may not be perfectly even across the entire cluster, the aggregate load of all skewed tuples is balanced (\ie satisfying $B \le \epsilon$). 
This design trade-off significantly reduces network cost while maintaining computational balance, a claim substantiated by our empirical evaluation in \cref{sec:exp}.

\vspace{.5ex}\noindent\textbf{Balanced Partition Algorithm}

\label{sec:Balanced Partition Algorithm}

Intuitively, according to the above strategy, for skewed tuples with a join key $\boldsymbol{x}$, each data node should send them to the same set of nodes, thus reducing the network overhead caused by multicasting the build table. 
However, in a distributed environment, achieving such consensus requires significant network communication. 
Specifically, for each arriving tuple, the system must synchronize the global balance state to decide whether to route it to the current $U(x)$ or broadcast a set expansion. This heavy synchronization overhead is prohibitive when handling intermediate results or data streams. To address this, we propose the \textit{Balanced Partition} algorithm based on a sequence generator, which achieves coordination-free consensus while ensuring load balance in the cluster.

In the absence of a sequence generator, each data node maintains a \textit{divergent} local set $U^i(x)$ for each skewed join key $\boldsymbol{x}$. 
This imposes two critical requirements:

(1) \textbf{Correctness}: The effective global target set must be the union of all local sets, \ie $U(x) = \bigcup^{n-1}_{i=0} U^i(x)$, to ensure that build tuples from $R_{skew}$ reach all potential destinations;

(2) \textbf{Efficiency}: The cardinality $|U(x)|$ must be minimized to reduce the multicast overhead.

Requirement (1) is guaranteed by the ASAP mechanism (detailed in \cref{ssec:asap}). 
Requirement (2) motivates the design of our deterministic sequence generator, described below. To ensure (2), the sets $U^i(x)$ for $i=0,\ldots , n-1$ should be constructed and expanded in a consistent manner, ensuring that the maximal set subsumes all others. This would minimize network overhead during multicast.
To fulfill that, we present the Node Set Update Algorithm, \ie \cref{algorithm:UpdateU}, which incorporates a deterministic sequence generator. For a skewed join key $\boldsymbol{x}$, the procedure \textsc{GenSeq} generates a consistent sequence of candidate nodes across all data nodes without requiring communication. The generation relies on a hash-based iterative mechanism. Specifically, it calculates a candidate node index using \( hash(\boldsymbol{x} + epoch) \bmod \boldsymbol{n} \) and iteratively increments the epoch to find non-duplicate nodes. Notably, the first element generated (where $epoch=0$) is always \( hash(\boldsymbol{x}) \bmod \boldsymbol{n} \), which corresponds to the target node for the tuple under the default hash distribution. This means that when \( \boldsymbol{x} \) is first identified as non-skewed and later identified as skewed, its target node will always remain in \( U(x) \). This resolves the issue raised in \textbf{Chal.} \bcirclednumber{3}.

\textsc{GenSeq} ensures consistency of sequences generated for the same skewed join key across various nodes. Based on this, each data node can independently expand \( U^i(x) \) following the same sequence. The procedure \textsc{UpdateU} in \cref{algorithm:UpdateU} demonstrates this update process. Whenever the node set requires expansion (and as long as \( |U^i(x)|<n \)), it invokes \textsc{GenSeq} to generate the next deterministic node in the sequence and adds it to \( U^i(x) \). This ensures that the multicast target set is expanded consistently across the cluster.

Given that each data node has achieved a consensus on the expansion of \( U^i(x) \), we move on to pursue load balance within the cluster. Recall in \cref{equation:B}, $S_{skew}^i$ shall be partitioned in a balanced manner such that the aggregate data volume transmitted to each compute node is balanced. We denote \( \widehat{S}_{skew}^{ij} \) as the partition of skewed tuples sent from data node \( i \) to compute node \( j \).
To ensure the partitions are balanced after partitioning, we define the \textit{local balance factor} $ B^i$: 
\begin{equation}
	B^i = ({\mathop{max}\limits_{j} |\widehat{S}_{skew}^{ij}|} - {\mathop{min}\limits_{j} |\widehat{S}_{skew}^{ij}|})/{\mathop{max}\limits_{j} |\widehat{S}_{skew}^{ij}|}
	\label{equation:BalanceFactor}
\end{equation}

The balance factor \( B^i \) measures the degree of balance in the \textit{aggregate} data volume across active target nodes. 
A smaller $B^i$ indicates that node $i$ distributes its skewed workload uniformly across the target compute nodes. 
Note that this metric accounts for the combined load of all skewed keys.
For instance, if skewed tuples with keys $k_1$ and $k_2$ from node $i$ are distributed to $\{\text{Compute Node 1}, \text{Compute Node 2}\}$ and $\{\text{Compute Node 3}\}$ respectively, with comparable volumes on each node, then $\max = \min$ and consequently $B^i = 0$. 
In this case, the system is balanced, and no expansion of candidate sets is required.
We set a threshold \( \epsilon \) such that \( B^i \leq \epsilon \).
\begin{theorem} 
\label{balafactor}
In the case of balanced partitioning, if the balance factor \( B^i\le\epsilon \), it follows that the overall balance factor \( B \) between the compute nodes after redistribution will also satisfy \( B\le\epsilon \).
\end{theorem}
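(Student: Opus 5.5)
The plan is to rewrite both balance conditions as ratio bounds and then aggregate over data nodes. For each data node $i$, the condition $B^i\le\epsilon$ is equivalent to
\[
\min_j |\widehat{S}_{skew}^{ij}| \ge (1-\epsilon)\max_j |\widehat{S}_{skew}^{ij}|.
\]
Since the probe tuples are disjointly partitioned, compute node $j$ receives
\[
|\widehat{S}_{skew}^{j}|=\sum_{i=0}^{n-1}|\widehat{S}_{skew}^{ij}|.
\]
So it suffices to show $\min_j |\widehat{S}_{skew}^{j}|\ge (1-\epsilon)\max_j|\widehat{S}_{skew}^{j}|$. This is exactly the statement $B\le\epsilon$, and it is meaningful provided the maximum is nonzero; if it is zero, $B$ is taken to be $0$.

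The key steps are as follows. Write $m_i=\min_j|\widehat{S}_{skew}^{ij}|$ and $M_i=\max_j|\widehat{S}_{skew}^{ij}|$. For any two compute nodes $j$ and $k$, the per-node inequalities give
\[
\sum_i |\widehat{S}_{skew}^{ij}| \ge \sum_i m_i \ge (1-\epsilon)\sum_i M_i \ge (1-\epsilon)\sum_i |\widehat{S}_{skew}^{ik}|.
\]
Now choose $j$ to be the minimizer of $|\widehat{S}_{skew}^{j}|$ and $k$ the maximizer. This yields $\min_j|\widehat{S}_{skew}^j|\ge(1-\epsilon)\max_j|\widehat{S}_{skew}^j|$, and rearranging gives $B\le\epsilon$. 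In other words, the fraction-type bound is preserved under summation because both $\min$ and $\max$ interact monotonically with sums: a sum of minima is at most the minimum of the sum, and a sum of maxima is at least the maximum of the sum.

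The main obstacle is pinning down over which index set the minimum in $B^i$ ranges. If $B^i$ only measures balance over the \emph{active} targets of node $i$, then the inequality $m_i\ge(1-\epsilon)M_i$ does not control compute nodes that node $i$ never sends to. The sum argument then fails, since a node that receives nothing from anyone has global load $0$. I would resolve this by interpreting the phrase ``in the case of balanced partitioning'' as meaning that $j$ ranges over all $n$ compute nodes in both definitions. In that case the per-node $\min$ includes zero-load nodes, and $B^i\le\epsilon<1$ forces every node to receive something. I would state this convention explicitly, together with the edge cases $M_i=0$ (such a node contributes nothing and can be dropped) and $\epsilon\ge 1$ (trivial, since $B\le 1$ always). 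Once this convention is fixed, the aggregation step above is routine.
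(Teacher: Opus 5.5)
Your proof is correct and follows the same route as the paper. Both arguments sum the per-data-node bounds over $i$, then compare $\sum_i\min_j$ and $\sum_i\max_j$ with $\min_j\sum_i$ and $\max_j\sum_i$, using $|\widehat{S}_{skew}^{j}|=\sum_i|\widehat{S}_{skew}^{ij}|$.

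Your version is actually more careful than the paper's in two places. First, the paper's ``rearrange the summation'' step is not an identity, since in general $\sum_i\max_j\neq\max_j\sum_i$. It is justified only by your chain $\min_j\sum_i\ge\sum_i m_i\ge(1-\epsilon)\sum_i M_i\ge(1-\epsilon)\max_j\sum_i$, which needs $\epsilon\le 1$. Second, your convention that the minimum ranges over all $n$ compute nodes is genuinely needed; it matches \textsc{CalcBalanceFactor}, which scans all $n$ partitions. Under the paper's ``active target nodes'' wording the statement fails: if every data node sends all its skewed tuples to the same single compute node, then each $B^i=0$ while $B=1$.
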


\begin{proof}
The relationship between \( \widehat{S}_{skew}^{ij} \) and \( \widehat{S}_{skew}^j \) is:

\begin{equation}
\label{equation:Sj}
	|\widehat{S}_{skew}^j| = \sum^{n-1}_{i=0} |\widehat{S}_{skew}^{ij}|
\end{equation}

Let $B_c=\epsilon$, transform \cref{equation:BalanceFactor} into the following inequality:
\begin{equation}
	{\mathop{max}\limits_{j} |\widehat{S}_{skew}^{ij}|} - {\mathop{min}\limits_{j} |\widehat{S}_{skew}^{ij}|} \leq B_c \cdot {\mathop{max}\limits_{j} |\widehat{S}_{skew}^{ij}|}
\end{equation}

Apply a summation transformation to the inequality:
\begin{equation}
	\sum^{n-1}_{i=0}({\mathop{max}\limits_{j} |\widehat{S}_{skew}^{ij}|} - {\mathop{min}\limits_{j} |\widehat{S}_{skew}^{ij}|}) \leq \sum^{n-1}_{i=0}(B_c \cdot {\mathop{max}\limits_{j} |\widehat{S}_{skew}^{ij}|})
\end{equation}

Rearrange the summation:
\begin{equation}
	{\mathop{max}\limits_{j} (\sum^{n-1}_{i=0} |\widehat{S}_{skew}^{ij}|)} - {\mathop{min}\limits_{j} (\sum^{n-1}_{i=0} |\widehat{S}_{skew}^{ij}|)} \leq B_c \cdot {\mathop{max}\limits_{j} (\sum^{n-1}_{i=0} |\widehat{S}_{skew}^{ij}|)}
\end{equation}

Divide both sides by $ {\mathop{max}\limits_{j} (\sum^{n-1}_{i=0} |\widehat{S}_{skew}^{ij}|)} $:
\begin{equation}
	\frac{{\mathop{max}\limits_{j} (\sum^{n-1}_{i=0} |\widehat{S}_{skew}^{ij}|)} - {\mathop{min}\limits_{j} (\sum^{n-1}_{i=0} |\widehat{S}_{skew}^{ij}|)}}{\mathop{max}\limits_{j} (\sum^{n-1}_{i=0} |\widehat{S}_{skew}^{ij}|)} \leq B_c 
\end{equation}

Substitute this into the inequality using \cref{equation:Sj}:
\begin{equation}
	\frac{{\mathop{max}\limits_{j} |\widehat{S}^{j}|} - {\mathop{min}\limits_{j} |\widehat{S}^{j}|}}{\mathop{max}\limits_{j} |\widehat{S}^{j}|} \leq B_c
\end{equation}

Combining this with \cref{equation:B}, we obtain:
\begin{equation}
	B \leq B_c
\end{equation}
\end{proof}

Consequently, ensuring the global requirement \( B \le \epsilon \) reduces to satisfying \( B^i \le \epsilon \) locally. 
To achieve this, we present the \textit{Balanced Partition} algorithm (\cref{algorithm:BalancedPartition}).
For each incoming skewed tuple \( t \) with join key \( \boldsymbol{x} \), the algorithm executes the following steps:

\begin{enumerate}
    \item \textbf{Tentative Assignment:} The algorithm tentatively assigns \( t \) to \( C(t') \), where \( t' \) is the previous tuple with the same key \( \boldsymbol{x} \), thereby preserving \textit{allocation locality}.
    
    \item \textbf{Balance Validation:} It then calculates the local balance factor \( B^i \). If the condition \( B^i \le \epsilon \) is satisfied, the assignment is confirmed.
    
    \item \textbf{Reassignment (Greedy Adjustment):} If \( B^i \le \epsilon \) is violated, the algorithm attempts to reassign \( t \) to the node in \( U(x) \) with the minimum current load.
    
    \item \textbf{Set Expansion:} If assigning to the minimum-load node still fails to satisfy \( B^i \le \epsilon \) (or if \( C(t') \) was already the minimum node), it indicates that the current capacity of \( U(x) \) is insufficient to accommodate the load without disrupting the overall balance. Only under this condition does \( U(x) \) expand according to \cref{algorithm:UpdateU}.
\end{enumerate}

\noindent\underline{\textit{Example:}} Assume the state in \cref{fig:Bala-Join} represents the current cluster status, where Data Node 2 has routed two skewed tuples ($\langle x, 4 \rangle_p, \langle x, 5 \rangle_p$) to Compute Node 1. The current candidate set is $U^2(x)=\{\text{Compute Node 1}\}$.
Suppose a new skewed tuple $\langle x, 7 \rangle_p$ arrives at Data Node 2.
The detector identifies it as skewed.
The algorithm tentatively assigns it to Compute Node 1 (following the previous placement).
Under this tentative assignment, Compute Node 1 would hold 3 skewed tuples ($2$ existing $+ 1$ new), while Compute Nodes 2 and 3 hold 0.
Consequently, the balance factor is calculated as $B = (3-0)/3 = 1$.
Since $B > \epsilon$, the set $U^2(x)$ is expanded to $\{\text{Compute Node 1}, \text{Compute Node 2}\}$.
Finally, $\langle x, 7 \rangle_p$ is distributed to Compute Node 2 to mitigate the imbalance.

\eat{
In \cref{algorithm:BalancedPartition}, Lines 1-15 calculate the balance factor based on \cref{equation:BalanceFactor}. The calculation of the balance factor has a linear time complexity. In practice, it can be further improved to achieve constant time complexity. As demonstrated in Line 27, each partition adds only one tuple at a time. Therefore, by keeping track of $ maxVal $, $ minVal $, and the size of each partition, we can efficiently update $ maxVal $ and $ minVal $ whenever a new tuple is added to a partition.

Lines 17-18 initialize an empty partition set and maps. Lines 19-41 assign partitions for each tuple in \( S_{skew}^i \) to achieve a balanced distribution. Lines 22-24 distribute the tuple with a newly identified skewed join key to the partition specified by the first element in the sequence. Lines 26-28 attempt to assign \( t \) to the partition previously selected for the tuple with the join key \( \boldsymbol{x} \). Lines 29-31 check the balance factor and reselect candidate nodes from \( U^i(x) \) if \( B^i > \epsilon \). If none of the nodes in \( U^i(x) \) meet the requirement (checked in Line 32), Lines 33-34 expand the \( U^i(x) \) to contain a new node. Finally, Line 42 returns the partition set \ie $ partition $.
}

\cref{algorithm:BalancedPartition} illustrates the detailed execution workflow of the proposed strategy. 
Notably, the calculation of the balance factor can be optimized to achieve constant time complexity ($O(1)$) in practice. 
By maintaining the current state of $ \max $ and $ \min $ partition sizes, the balance factor can be updated incrementally upon each tuple assignment, thereby avoiding repetitive linear scans.

\noindent\textit{D. Algorithm Complexity}

The BPPR algorithm consists of the \textit{Balanced Partition} algorithm and the multicast mechanism. The former dispatches skewed tuples from the probe table to various nodes, while the latter replicates the tuples with the same join key in the build table to corresponding nodes. In a $n$-node cluster, skewed tuples on node $i$ are denoted as ${S}_{skew}^i$, with the number of tuples represented as $m_p=|{S}_{skew}^i|$. The number of tuples from $R_{Sskew}$ is denoted as $m_b$. Define $D({S}_{skew}^i)$ as the deduplication operation, where the number of distinct skewed join keys on each node after deduplication is denoted as $k = |D({S}_{skew}^i)|$. 

In the \textit{Balanced Partition} algorithm, skewed tuples are traversed with time complexity $O(m_p)$. The optimized computation of the balance factor has a constant time complexity $O(1)$, the same as expanding the set of nodes. Therefore, the time complexity of the \textit{Balanced Partition} algorithm is $O(m_p)$. The algorithm \textit{Balanced Partition} records the $m$ tuples and $U^i(x)$ for each skewed join key $\boldsymbol{x}$, the size of which is at most $n$ and at least $1$. The sequence for each skewed join key is kept with the size $n$. Hence, the space complexity of the \textit{Balanced Partition} algorithm is $O(m_p+k\cdot n)$. Since tuples are directly dispatched in practice, the space complexity is $O(k\cdot n)$.

For multicast, each tuple with the join key $\boldsymbol{x}$ in $R_{Sskew}$ is replicated to $U^i(x)$. Then the time complexity is at least $O(m_b)$ and at most $O(m_b \cdot n)$. The space complexity is $O(1)$.

For each data node, the time and space complexity is listed in \cref{tb:complexity}.

In summary, the \textit{Balanced Partition} algorithm significantly improves cluster-wide load balance by partitioning skewed tuples at a fine granularity. 
Meanwhile, the incremental expansion strategy for the candidate set minimizes network overhead. 
Crucially, the effectiveness of this approach is agnostic to the specific data distribution, enabling dynamic adaptation to skewed workloads, thereby addressing \textbf{Chal.} \bcirclednumber{1} and \textbf{Chal.} \bcirclednumber{3} listed in \cref{subsec:observations}. 

However, while Algorithm~\ref{algorithm:UpdateU} ensures consistency across local views, correctly multicasting build tuples necessitates the logical aggregation of these local sets \( U^i(x) \) into a global target set \( U(x) \). 
In the following section, we introduce the Distributed Detector and the \textit{ASAP} mechanism. 
These components are designed to handle streaming skew detection and facilitate the precise multicast of build tuples to \( U(x) \) without heavy synchronization.

\begin{algorithm}[t]
\renewcommand{\algorithmicrequire}{\textbf{Input:}}
\renewcommand{\algorithmicensure}{\textbf{Output:}}
\caption{Node Set Update Algorithm}
\label{algorithm:UpdateU}
\begin{algorithmic}[1]
\Require join key $\boldsymbol{x}$, sequence $\boldsymbol{seq}$ for $\boldsymbol{x}$, number of nodes $\boldsymbol{n}$
\Ensure updated $\boldsymbol{U^{i}(x)}$

\Procedure{GenSeq}{$\boldsymbol{x}, \boldsymbol{seq}, \boldsymbol{n}$}
    \If{$\boldsymbol{seq}.size < \boldsymbol{n}$}
        \State $epoch \gets \boldsymbol{seq}.size$
        \State $candidate \gets hash(\boldsymbol{x} + epoch) \bmod \boldsymbol{n}$
        \While{$candidate \in \boldsymbol{seq}$}
            \State $epoch \gets epoch + 1$
            \State $candidate \gets hash(\boldsymbol{x} + epoch) \bmod \boldsymbol{n}$
        \EndWhile
        \State $\boldsymbol{seq}.add(candidate)$
    \EndIf
    \State \textbf{return} $\boldsymbol{seq}$
\EndProcedure

\Procedure{UpdateU}{$\boldsymbol{x}, \boldsymbol{U^{i}(x)}, \boldsymbol{seq}$}
    \State $size \gets \boldsymbol{U^{i}(x)}.size$
    \State $\boldsymbol{seq} \gets$ \Call{GenSeq}{$\boldsymbol{x}, \boldsymbol{seq}, \boldsymbol{n}$}
    \If{$size == \boldsymbol{seq}.size$}
        \State \textbf{return} $\boldsymbol{U^{i}(x)}$
    \EndIf
    \State $\boldsymbol{U^{i}(x)}.add(\boldsymbol{seq}[size])$
    \State \textbf{return} $\boldsymbol{U^{i}(x)}$
\EndProcedure
\end{algorithmic}
\end{algorithm}

\begin{algorithm}[!t]
\renewcommand{\algorithmicrequire}{\textbf{Input:}}
\renewcommand{\algorithmicensure}{\textbf{Output:}}
\caption{Balanced Partition Algorithm}
\label{algorithm:BalancedPartition}
\begin{algorithmic}[1]
\Require Skewed tuples $\boldsymbol{S_{skew}^i}$, Threshold $\epsilon$, Number of nodes $\boldsymbol{n}$
\Ensure Set of balanced subpartitions $\boldsymbol{partition}$

\Procedure{CalcBalanceFactor}{$\boldsymbol{partition}$}
    \State $maxVal \gets 0$
    \State $minVal \gets \infty$
    \ForAll{$sub \in \boldsymbol{partition}$}
        \State $size \gets sub.size$
        \If{$size > maxVal$} $maxVal \gets size$ \EndIf
        \If{$size < minVal$} $minVal \gets size$ \EndIf
    \EndFor
    \If{$maxVal == 0$} \State \textbf{return} $0$ \EndIf
    \State \textbf{return} $(maxVal - minVal) / maxVal$
\EndProcedure

\Procedure{BalancedPartition}{$\boldsymbol{S_{skew}^i}, \epsilon, \boldsymbol{n}$}
    \State $\boldsymbol{partition} \gets$ Initialize $\boldsymbol{n}$ empty lists
    \State $\boldsymbol{U^i}, \boldsymbol{Seq} \gets$ Initialize empty maps
    \ForAll{$t \in \boldsymbol{S_{skew}^i}$}
        \State $\boldsymbol{x} \gets t.b$
        \State $ID \gets 0$ \Comment{Node index}
        \If{$\boldsymbol{Seq}_{\boldsymbol{x}} == []$}
            \State \Call{UpdateU}{$\boldsymbol{x}, \boldsymbol{U^i(x)}, \boldsymbol{Seq}_{\boldsymbol{x}}$}
            \State $ID \gets \boldsymbol{Seq}_{\boldsymbol{x}}[0]$
        \Else
            \State $lastID \gets$ Subpartition ID last selected by $\boldsymbol{x}$
            \State $\boldsymbol{partition}[lastID].add(t)$ \Comment{Tentative add}
            \State $balanceFactor \gets$ \Call{CalcBalanceFactor}{$\boldsymbol{partition}$}
            \State $\boldsymbol{partition}[lastID].delete(t)$ \Comment{Rollback}
            \If{$balanceFactor > \epsilon$}
                \State $ID \gets$ min-size subpartition in $\boldsymbol{U^i(x)}$
                \If{$ID == lastID$}
                    \State \Call{UpdateU}{$\boldsymbol{x}, \boldsymbol{U^i(x)}, \boldsymbol{Seq}_{\boldsymbol{x}}$}
                    \State $ID \gets \boldsymbol{Seq}_{\boldsymbol{x}}.last()$
                \EndIf
            \Else
                \State $ID \gets lastID$
            \EndIf
        \EndIf
        \State $\boldsymbol{partition}[ID].add(t)$ \Comment{Final assignment}
    \EndFor
    \State \textbf{return} $\boldsymbol{partition}$
\EndProcedure
\end{algorithmic}
\end{algorithm}

\begin{table}[t]
    \centering
    \caption{Time and Space Complexity of BPPR}\label{tb:complexity}
    \resizebox{\linewidth}{!}{%
    \begin{tabular}{|c|c|c|c|}
        \hline
        & Balanced Partition & Multicast & BPPR \\
        \hline
        time & $O(m_p)$ & $O(m_b \cdot n)$ & $O(\max(m_p,m_b \cdot n))$ \\
        \hline
        space & $O(k \cdot n)$ & $O(1)$ & $O(k \cdot n)$ \\
        \hline
    \end{tabular}%
    }
\end{table}

\section{DISTRIBUTED DETECTOR ON-THE-FLY}
\label{sec:det}
{
Building upon the partitioning strategy established in \cref{sec:bala-join}, this section details the runtime components required to realize Bala-Join in a distributed streaming environment. 
We introduce two core modules: the \textbf{Distributed Skew Detector}, which identifies skewed keys on-the-fly without global synchronization, and the \textbf{ASAP} mechanism, which implements the receiver-driven synchronization of build tuples to the dynamic target sets \( U(x) \).

\subsection{Distributed Skew Detector for BPPR}
\label{ssec:detector}
Since global statistics are unavailable for data streams or intermediate results, we employ a local detection strategy based on the \textit{Space Saving} algorithm~\cite{metwally2005efficient}. 
Unlike exact counting methods that require linear space, \textit{Space Saving} efficiently identifies heavy hitters using bounded memory (\ie \( O(k) \) counters for tracking \( k \) elements). The primary challenge of the Space Saving algorithm lies in efficiently updating elements and removing elements corresponding to the minimum count. Following the suggestion of \cite{rodiger2016flow}, we adopt the Space Saving algorithm based on a hash table and an ordered array as the detector for the single-node environment, therefore achieving minimized time and space cost.

A straightforward approach~\cite{huang2014ld,cafaro2016parallel} to distributed detection is to perform local counting on each node, aggregate lists globally to reach consensus, and then redistribute tuples. However, this approach poses three critical problems:}

\textbf{Double Traversals of $S$}: Each data node must traverse the table \( S^i \) twice (once for local detection and again for redistribution). In streaming scenarios, this buffering results in significant memory and I/O overhead.

\textbf{Independent Standalone Detection}: Varying processing speeds across nodes lead to synchronization barriers. Faster nodes become idle while waiting for slower nodes to complete detection, hindering overall throughput.

\textbf{Communication Overhead for Consensus}: Achieving a globally recognized list requires unavoidable and expensive communication overhead between data nodes.

{
To overcome these limitations, Bala-Join abandons global consensus in favor of a completely local detection strategy tightly integrated with the BPPR algorithm. 
By treating the detector as a \textit{gatekeeper}, the system achieves detection and redistribution in a \textbf{single pass} over the data stream. 
Upon the arrival of a probe tuple \( t \) with join key \( \boldsymbol{x} \), the system executes the following classification logic:

\begin{itemize}
    \item \textbf{Non-Skewed Flow (Default):} If \( t \) is not identified as skewed locally, it follows the standard hash partitioning path. It is directly routed to its hash-distributed node (calculated as \( q = \text{hash}(x) \)).
    
    \item \textbf{Skewed Flow (BPPR):} If \( t \) is identified as skewed, it bypasses the default hash path and is handed over to the \textit{Balanced Partition} algorithm. The algorithm routes \( t \) to a target node within the dynamic candidate set \( U^i(x) \) to maintain the balance factor \( B^i \le \epsilon \).
\end{itemize}

Crucially, this architecture decouples detection accuracy from routing consistency. 
Due to local data variations, the join key \( \boldsymbol{x} \) might be identified as \textit{skewed} on Data Node 1 (routed to \( U^i(x) \)) but \textit{non-skewed} on Data Node 3 (routed to the hash-distributed node \( q \)). 
The robustness of our system relies on the sequence generator (\cref{algorithm:UpdateU}), which ensures that the hash-distributed node \( q \) is strictly included within the candidate set \( U^i(x) \) (\ie \( q \in U^i(x) \)).
This guarantee ensures that regardless of whether a key is treated as skewed or non-skewed locally, the build tuples residing at node \( q \) remain accessible to the entire set \( U(x) \), thereby enabling independent operation without synchronization latency.

This integration inherently implies a \textbf{Deferred Build Partitioning} strategy. 
Since skew is handled dynamically, Bala-Join treats all tuples in the build table \( R \) as non-skewed initially, hash-distributing them to their default hash nodes defined by \( \text{hash}(\boldsymbol{x}) \). 
The partitioning (replication) of \( R \) to the expanded set \( U^i(x) \) occurs only on demand, facilitated by the mechanism described in the following subsection.

\subsection{ASAP Mechanism}
\label{ssec:asap}
To guarantee the correctness of \textsf{HASH JOIN}s in a distributed setting, data nodes across the cluster must reach two forms of consensus:

\textbf{\textit{Consensus 1}}: A consensus on skewed keys for data nodes to handle skewed tuples from \( S \) in a balanced manner;

\textbf{\textit{Consensus 2}}: A consensus on the target set \( U(x) \) for multicasting corresponding build tuples.

We achieve consensus through the \textbf{Active-Signaling and Asynchronous-Pulling (ASAP)} mechanism. Build tuples are always hash-distributed as they arrive. Upon the arrival of probe tuples, non-skewed tuples follow hash distribution, while skewed tuples are routed by BPPR. The relevant nodes are notified that they have received a skewed tuple when it arrives. Then they asynchronously pull the matching build tuples via ASAP, thereby ensuring correct join results without any pre-scan or materialization of intermediate tables. Therefore, Bala-Join could complete skew detection and redistribution in a single pass.

\noindent\underline{\textit{Example:}}~\cref{fig:Bala-Join} also shows how the ASAP mechanism works. All tuples in the build table are hash-distributed. When the skewed probe tuples ${\langle x,2\rangle}_{p}$ and ${\langle x,3\rangle}_{p}$ arrive at Compute Node 2, the Compute Node is notified that it has received tuples with the join key $x$. The Compute Node then asynchronously pulls the matching build tuples ${\langle x,1\rangle}_{b} \sim {\langle x,3\rangle}_{b}$, which have been hash-distributed to Compute Node 1.

By enabling independent data pulling, the set of nodes receiving skewed probe tuples implicitly constitutes the global target set $U(x)$, to which corresponding build tuples are pulled, effectively achieving the multicast of the build data.

As established in \cref{ssec:detector}, local detection biases may cause a key \( \boldsymbol{x} \) to be identified as \textit{skewed} on some nodes but \textit{non-skewed} on others. ASAP handles this inconsistency seamlessly.
If a tuple is routed to \( q \) (false negative), \( q \) already holds the build tuples locally (received via standard hash distribution), so the join proceeds immediately without pulling.
If a tuple is routed to any other node in \( U(x) \) (true positive), the ASAP mechanism ensures it can pull the build tuples from \( q \).
Since the sequence generator guarantees \( q \in U(x) \), the build target $q$ remains within the global candidate set, ensuring that local detection errors affect only the \textit{efficiency} (load balance), but never the \textit{correctness} of the result.

In summary, the proposed detector operates purely on local state, effectively eliminating the communication overhead and synchronization barriers associated with global agreement. By leveraging the sequence generator to ensure algorithmic robustness, the system maintains guaranteed correctness even in the presence of inconsistent local views. This design explicitly resolves \textbf{Chal.} \bcirclednumber{2}, as it removes the dependency on global state coordination while ensuring that skewed tuples are handled within the valid candidate set.
}

\section{EVALUATION}
\label{sec:exp}
\subsection{Experiment Setup}
Our setup replicates real-world cross-region topologies from our industrial projects with Inspur using a purpose-built simulation framework for distributed hash joins. This establishes a realistic testbed for evaluating performance over a WAN, allowing us to assess the solution's robustness by systematically varying the network bandwidth from congested to high-performance levels.
We adopt $1\sim 8$ Huawei cloud servers deployed across Beijing, Shanghai and Guiyang respectively, which together form a cluster of 3 to 24 nodes. Each node is equipped with a 4-core CPU, 8GB of RAM, a general-purpose SSD, and operates on the Ubuntu 18.04 operating system.

We evaluate the performance of various redistribution strategies specifically designed for skewed data, including \textsf{GraHJ}~\cite{kitsuregawa1983application}, PRPD~\cite{xu2008handling}, PnR~\cite{yang2023one}, SFR~\cite{stamos1993symmetric}, and BPPR. Afterward, we compare the performance of Bala-Join with other distributed hash join baselines that employ a skew detector, as listed below.
\begin{itemize}
    \item \textsf{GraHJ}: A strategy that only uses hash partitioning
    \item \textsf{PRPD}: A strategy that keeps skewed tuples locally
    \item \textsf{PnR}: A strategy that distributes skewed tuples evenly
    \item \textsf{SFR}: A strategy that symmetrically replicates skewed tuples to nodes on rows and columns across the matrix
     \item \textsf{Flow-Join}: A comprehensive solution with the SFR strategy and a detector based on the Space Saving algorithm
\end{itemize}

For experimental data, we use a synthetic dataset with Zipf~\cite{zipf1949human} distribution, which is widely used in previous works\cite{huang2014ld,rodiger2016flow,kitsuregawa1990bucket,shatdal1993using,rodiger2014locality}. The degree of skewness of the data exhibits a positive correlation with the Zipf factor $z$. In addition, we adopt the SSB-skew dataset~\cite{justen2024polar} with the scale factor set to 10, which is a variation of the Star Schema Benchmark~\cite{o2007star}. 
We focus on the join operations in the queries and execute them through the distributed hash join.

Performance is measured by throughput and network traffic. Throughput refers to the number of result tuples generated per second by the cluster, computed as the total number of join result tuples divided by the end-to-end elapsed time (including skew detection when enabled, redistribution, and local join computation). Network traffic represents the amount of data (in bytes) transmitted over the network. The former reflects the processing efficiency of the system, while the latter indicates the network overhead.

\subsection{Redistribution Strategy}\label{ssec52}

\begin{figure}[t]
    
    \subfloat[Throughput vs. bandwidth]{
    		\label{fig:phase1-bandwidth_t}
        \includegraphics[width=0.47\linewidth]{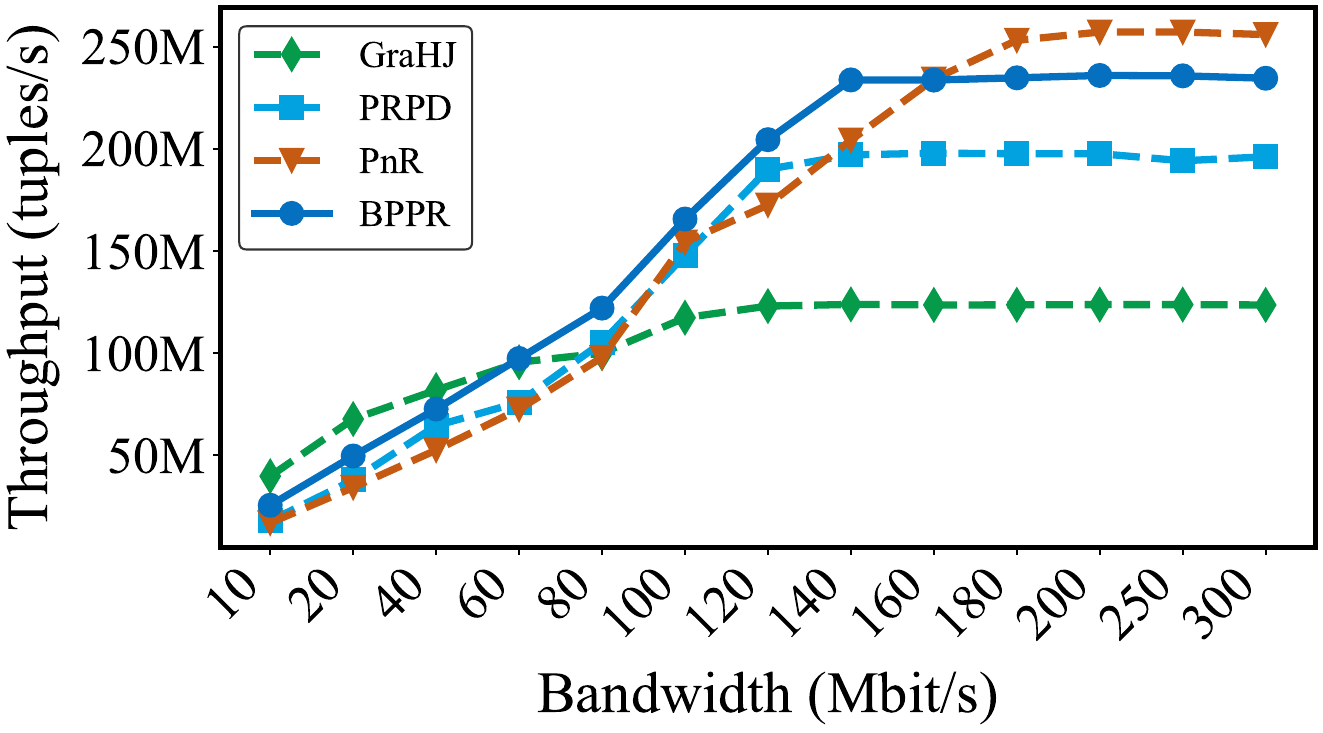}
        }
    \subfloat[Network overhead vs. bandwidth]{
    		\label{fig:phase1-bandwidth_n}
        \includegraphics[width=0.47\linewidth]{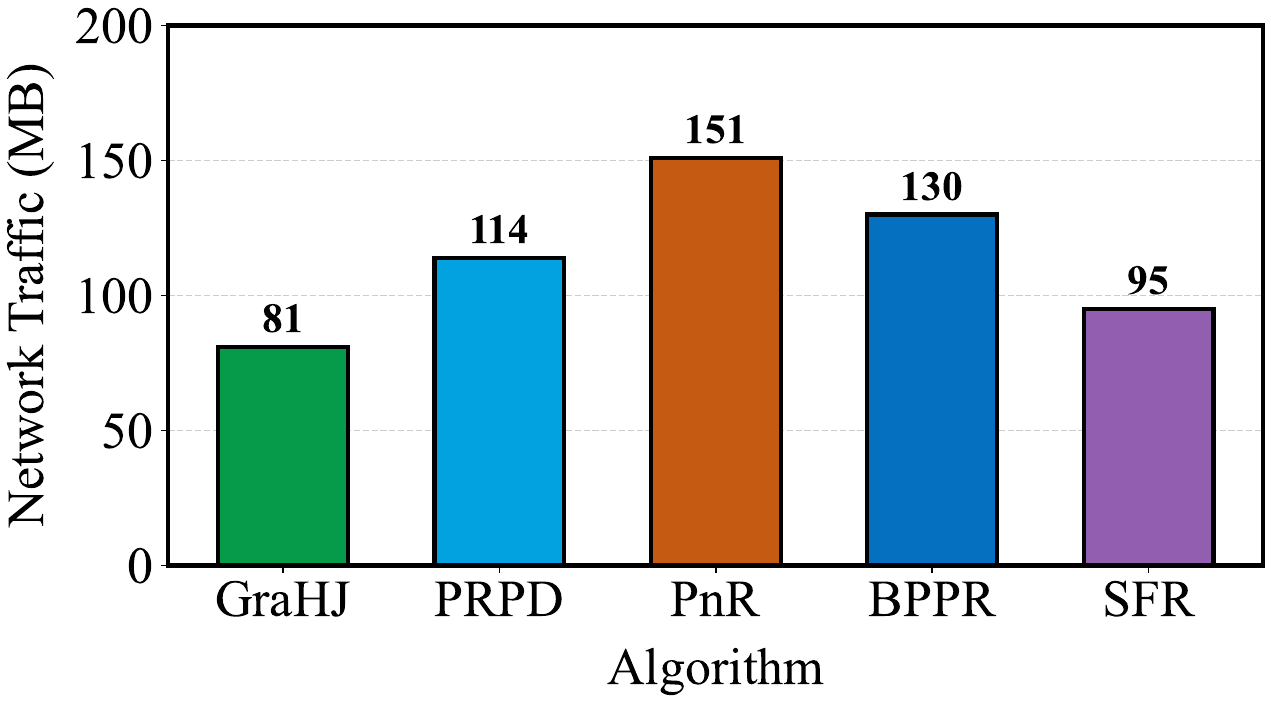}
        }\\
    \subfloat[Throughput vs. balance factor]{
    		\label{fig:phase1-balance-t}
        \includegraphics[width=0.48\linewidth]{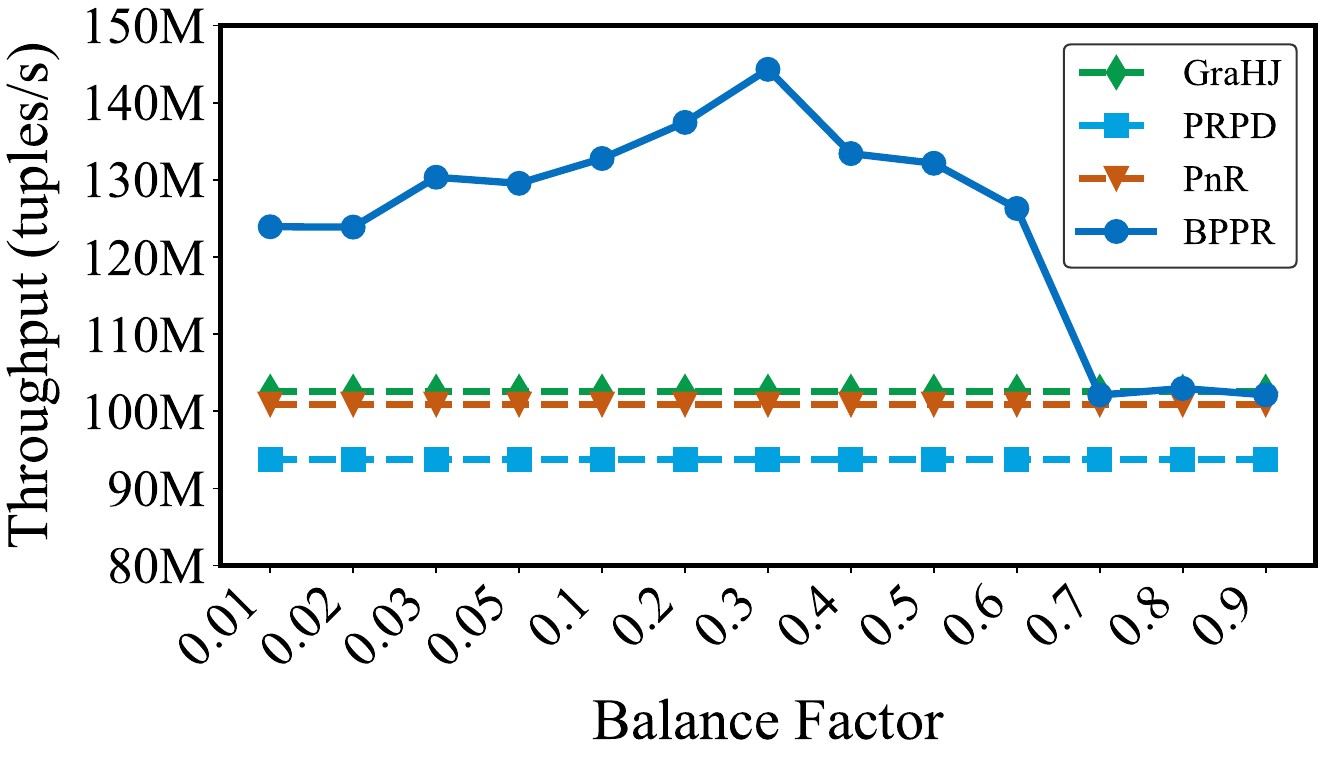}
        }
    \subfloat[Network overhead vs. balance factor]{
    		\label{fig:phase1-balance-n}
        \includegraphics[width=0.48\linewidth]{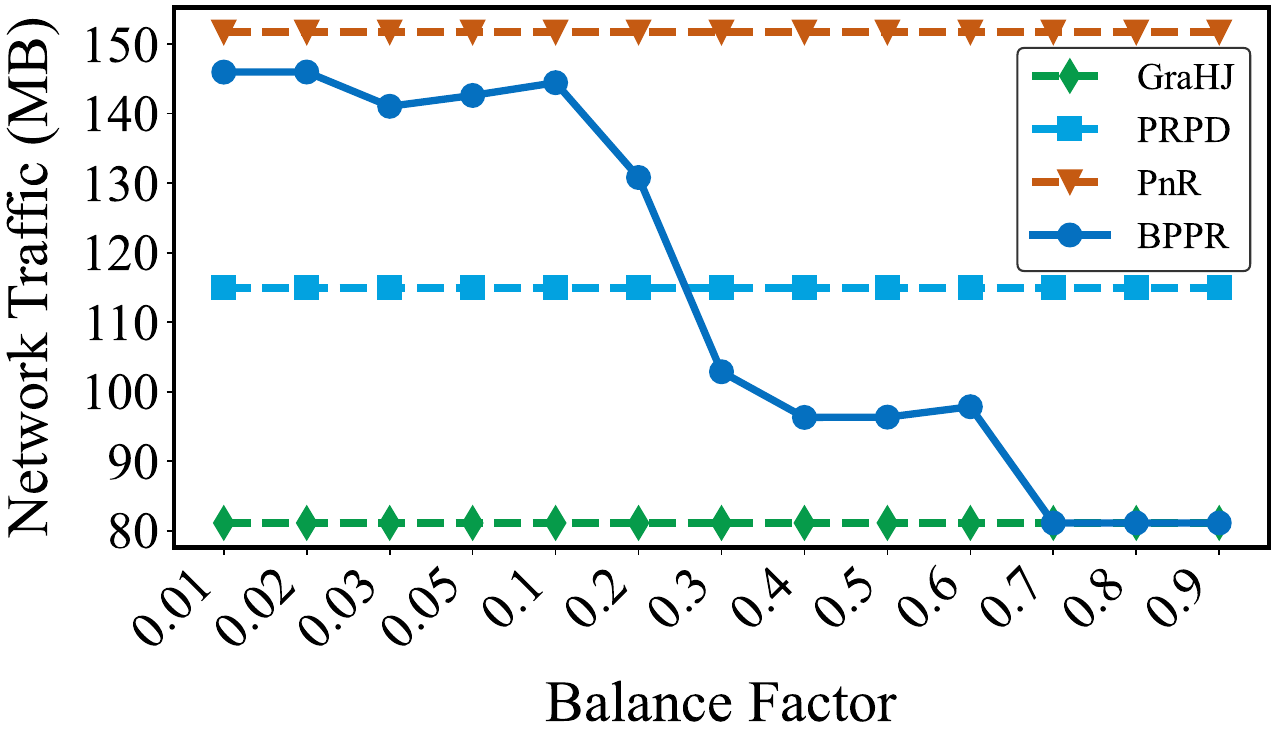}
        }\\
    \subfloat[Throughput vs. Zipf factor]{
    		\label{fig:phase1-zipf-t}
        \includegraphics[width=0.48\linewidth]{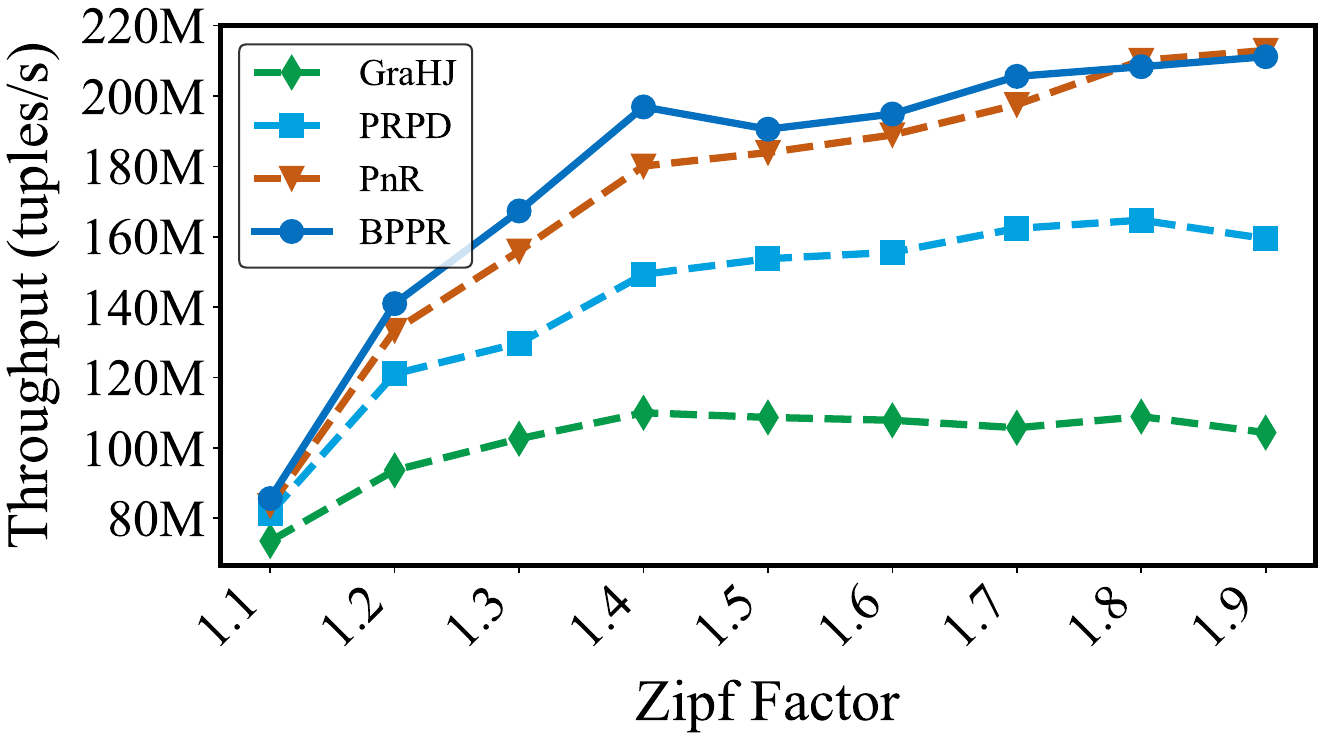}
        }
    \subfloat[Network overhead vs. Zipf factor]{
    		\label{fig:phase1-zipf-n}
        \includegraphics[width=0.48\linewidth]{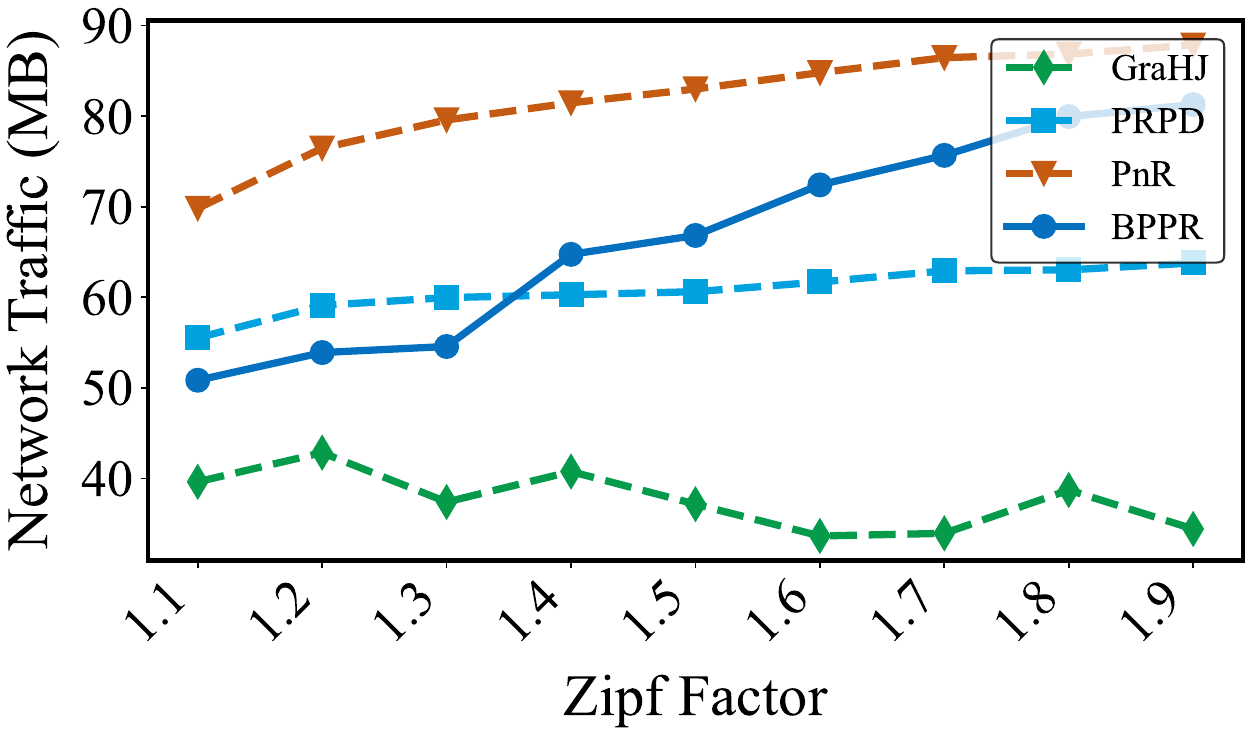}
        }\\
    \subfloat[Throughput vs. $ |R|/|S| $]{
    		\label{fig:phase1-rsratio-t}
        \includegraphics[width=0.48\linewidth]{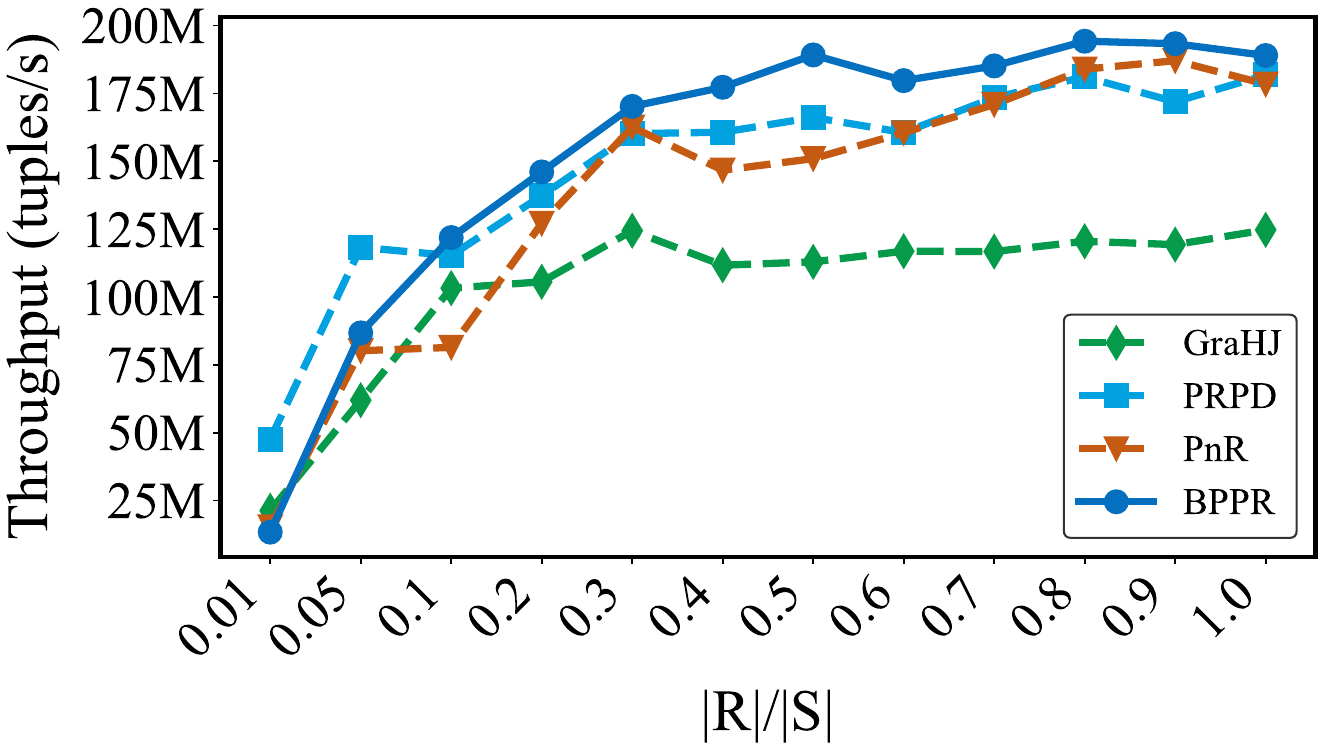}
        }   
    \subfloat[Network overhead vs. $ |R|/|S| $]{
    		\label{fig:phase1-rsratio-n}
        \includegraphics[width=0.48\linewidth]{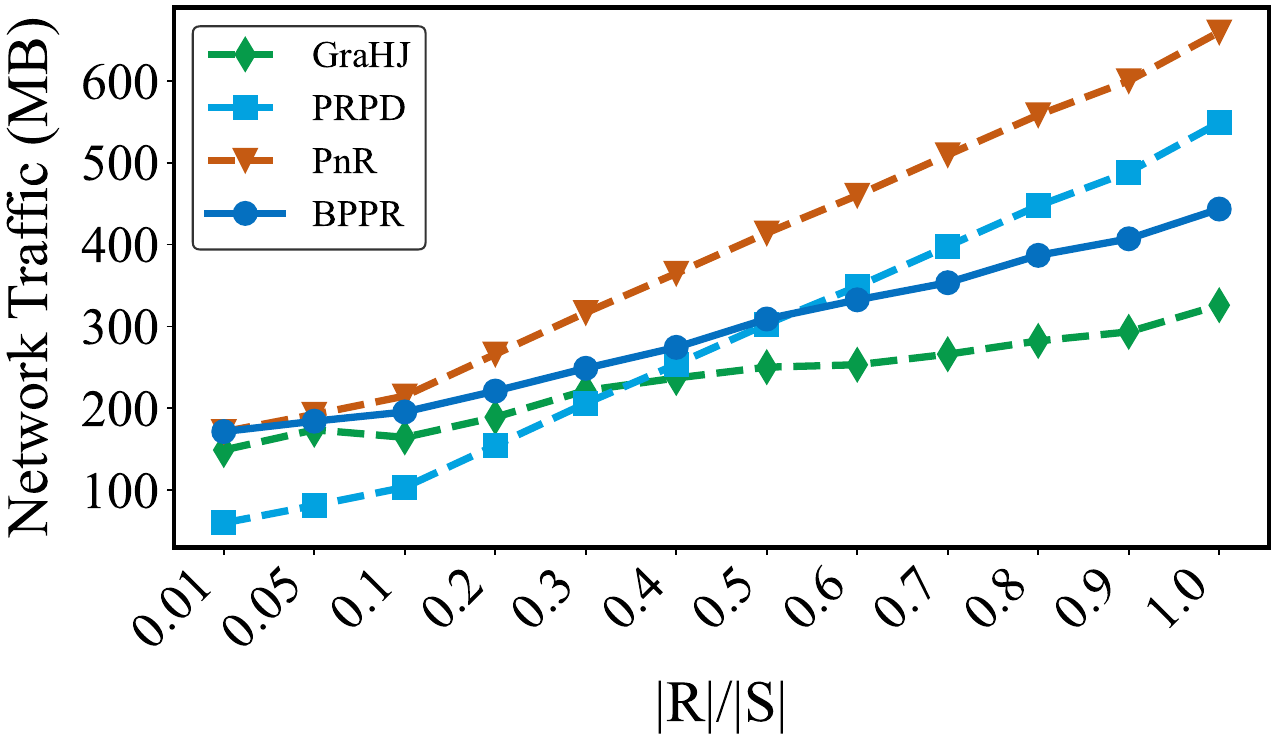}
        }\\
    \caption{Comparison of distribution strategies in various scenarios}
    \label{fig:phase1}
\end{figure}

\begin{table}
\centering
\caption{Average throughput and ranking of different algorithms over all experiments in \cref{ssec52}}
\label{table:exp1}
\resizebox{\linewidth}{!}{
\begin{tabular}{c|c|c|c|c|c}
\hline
 & Bandwidth & Balance Factor & Zipf Factor & $ |R|/|S| $ & Comprehensive Ranking \\\hline
BPPR & $ \mathbf{165M}\,(1) $ & $ \mathbf{125M}\,(1) $ & $ \mathbf{178M}(1) $ &  $ \mathbf{154M}\,(1) $ & $ \mathbf{622M}\,(1)$\\
PnR & $ 159M\,(2) $ & $ 101M\,(3) $ & $ 172M\,(2) $ &  $ 137M\,(4) $& $ 569M\,(2)$\\
PRPD & $ 140M\,(3) $ & $ 94M\,(4) $ & $ 142M\,(3) $ &  $ 148M\,(2) $& $ 524M\,(3)$\\
SFR & $ 127M\,(4) $ & $ 81M\,(5) $ & $ 131M\,(4) $ &  $ 140M\,(3) $& $ 470M\,(4)$\\ 
GraHJ & $ 105M\,(5) $ & $ 103M\,(2) $ & $ 102M\,(5) $ &  $ 103M\,(5) $& $ 413M\,(5)$\\
\hline
\end{tabular}
}
\end{table}

First, we conduct a series of experiments to evaluate the redistribution strategies. In these experiments, each algorithm is provided with the skew information directly. Default settings: bandwidth = 100 Mbit/s, balance factor = 0.2, Zipf factor = 1.25, and $|R|/|S|$ = 2/3.

\cref{fig:phase1-bandwidth_t} and \cref{fig:phase1-bandwidth_n} show the performance of five strategies at different bandwidths of 10-300 Mbit/s. GraHJ has the lowest network traffic, while BPPR's traffic is 21\% less than PnR but 14\% more than PRPD and exceeds SFR. BPPR performs best at medium bandwidths and is competitive at high bandwidths, especially when compared to GraHJ.

\cref{fig:phase1-balance-t} and \cref{fig:phase1-balance-n} explore the impact of the BPPR balance factor. As the balance factor increases, the throughput of BPPR increases in the beginning and reaches the highest value at 0.3. After that, the throughput drops and eventually becomes similar to GraHJ at 0.7. As shown in \cref{fig:phase1-balance-n}, a looser balance threshold allows BPPR to keep the node set $U(x)$ for each skewed key smaller, thereby reducing the multicast scope on the build side and significantly lowering the overall network transmission overhead.

\cref{fig:phase1-zipf-t} and \cref{fig:phase1-zipf-n} compare algorithms under different Zipf factors, which control the degree of data skew. BPPR outperforms all other algorithms in the aspect of throughput and ranks $2^{nd}$ in the aspect of network overhead.

\cref{fig:phase1-rsratio-t} and \cref{fig:phase1-rsratio-n} evaluate performance in different ratios $|R|/|S|$. BPPR remains stable, performing slightly worse than PRPD when $|R|/|S|$ is less than 0.1.

\cref{table:exp1} summarizes the average throughput and rankings in the four sets of experiments. BPPR demonstrates greater adaptability compared to PRPD, SFR, and PnR, consistently achieving the highest rank in all experimental settings.

\subsection{Distributed Skew Detector}\label{ssec53}
\begin{figure*}[t]
    \centering
    \subfloat[Throughput vs. bandwidth (3 nodes)]{
    		\label{fig:phase2-bandwidth3}
        \includegraphics[width=0.28\linewidth]{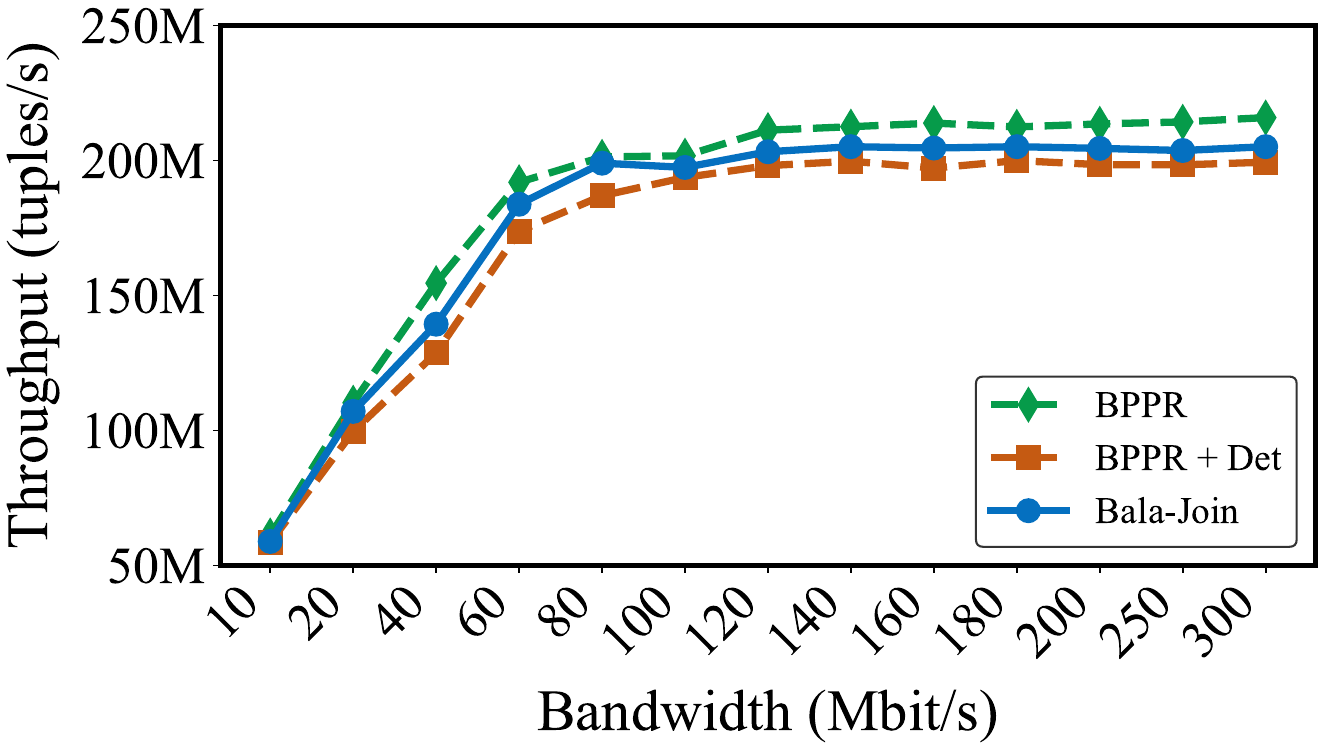}
        }\hfill
    \subfloat[Throughput vs. $ |R|/|S| $ (3 nodes)]{
    		\label{fig:phase2-rsratio3}
        \includegraphics[width=0.28\linewidth]{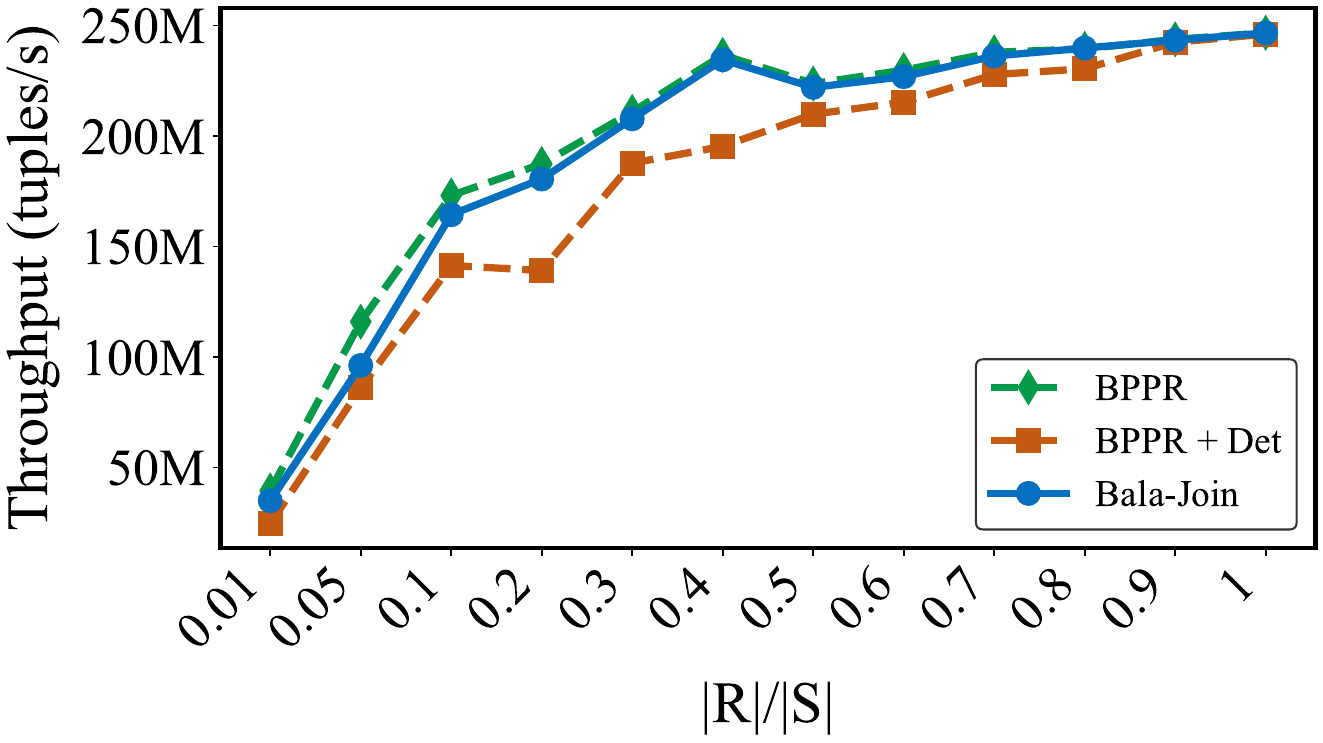}
        }\hfill
    \subfloat[Throughput vs. Zipf factor (3 nodes)]{
    		\label{fig:phase2-zipf3}
        \includegraphics[width=0.28\linewidth]{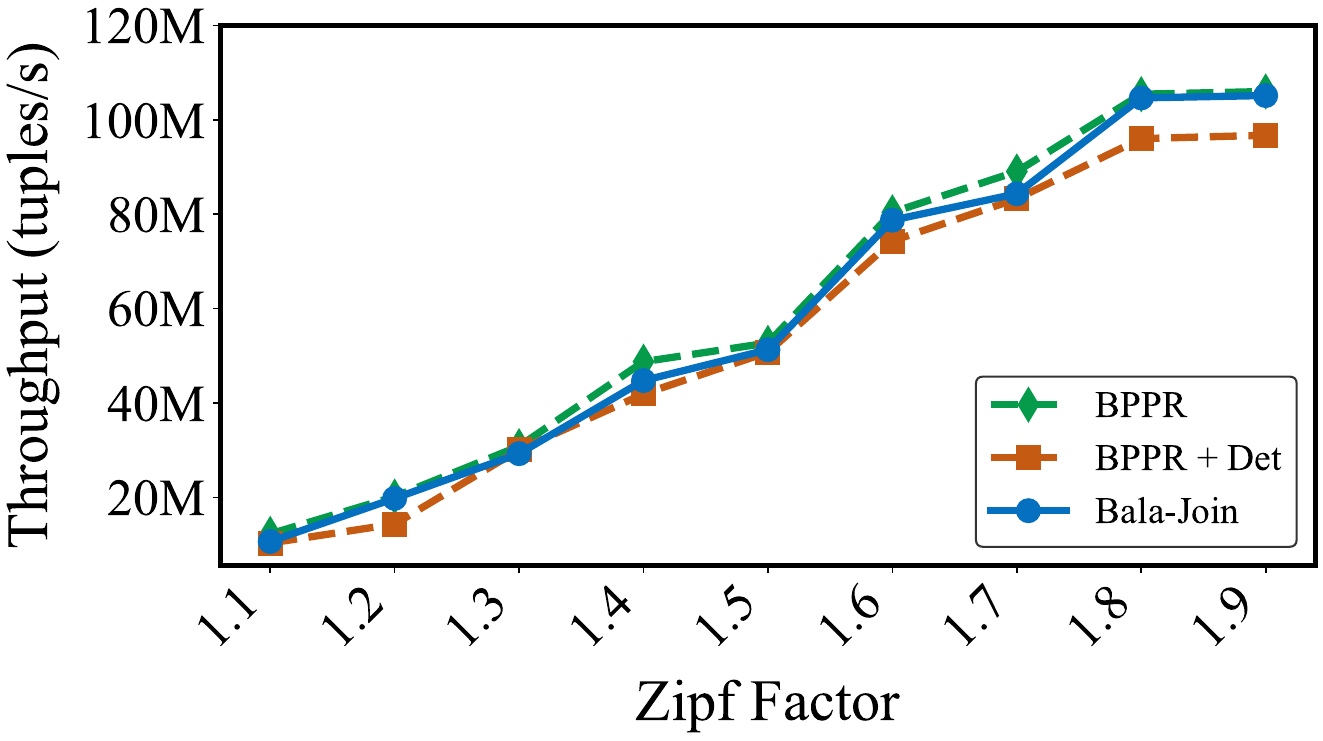}
        }\\\vspace{0ex}
    \subfloat[Throughput vs. bandwidth (6 nodes)]{
    		\label{fig:phase2-bandwidth6}
        \includegraphics[width=0.28\linewidth]{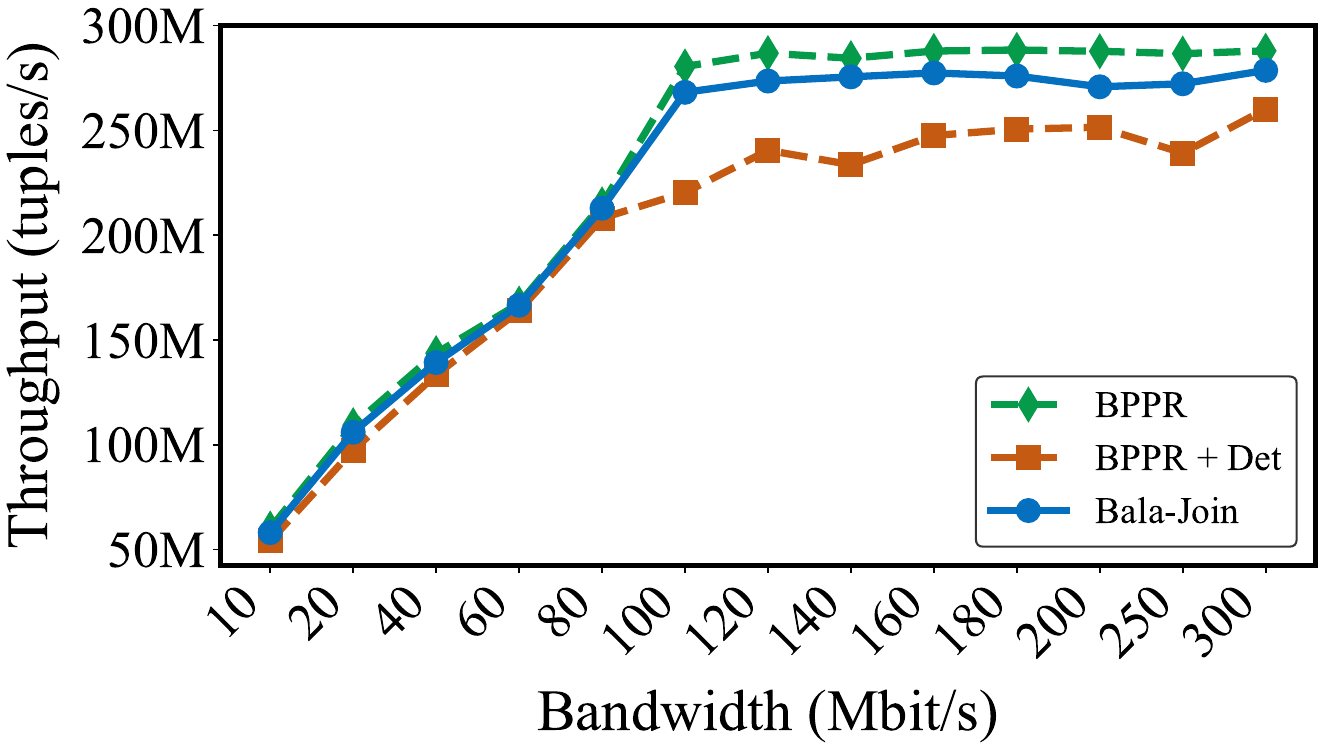}
        }\hfill
    \subfloat[Throughput vs. $ |R|/|S| $ (6 nodes)]{
    		\label{fig:phase2-rsratio6}
        \includegraphics[width=0.28\linewidth]{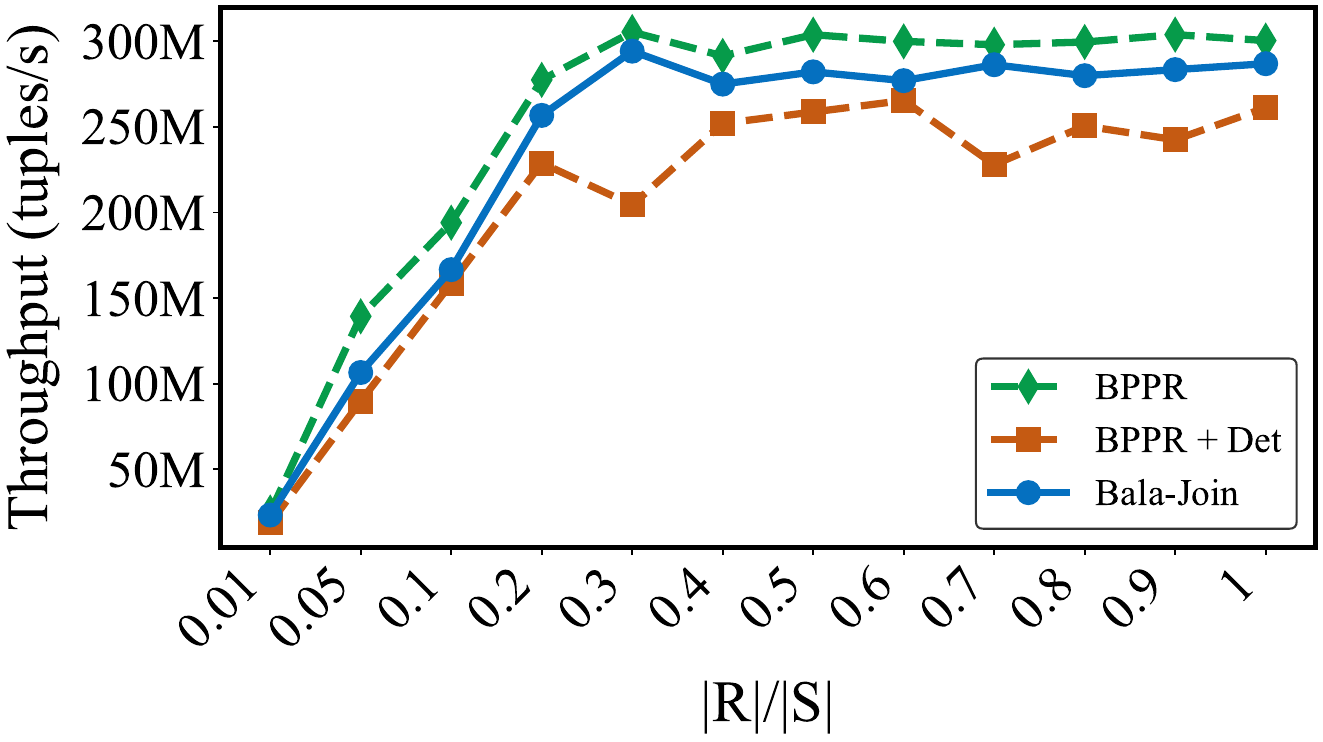}
        }\hfill
    \subfloat[Throughput vs. Zipf factor (6 nodes)]{
    		\label{fig:phase2-zipf6}
        \includegraphics[width=0.28\linewidth]{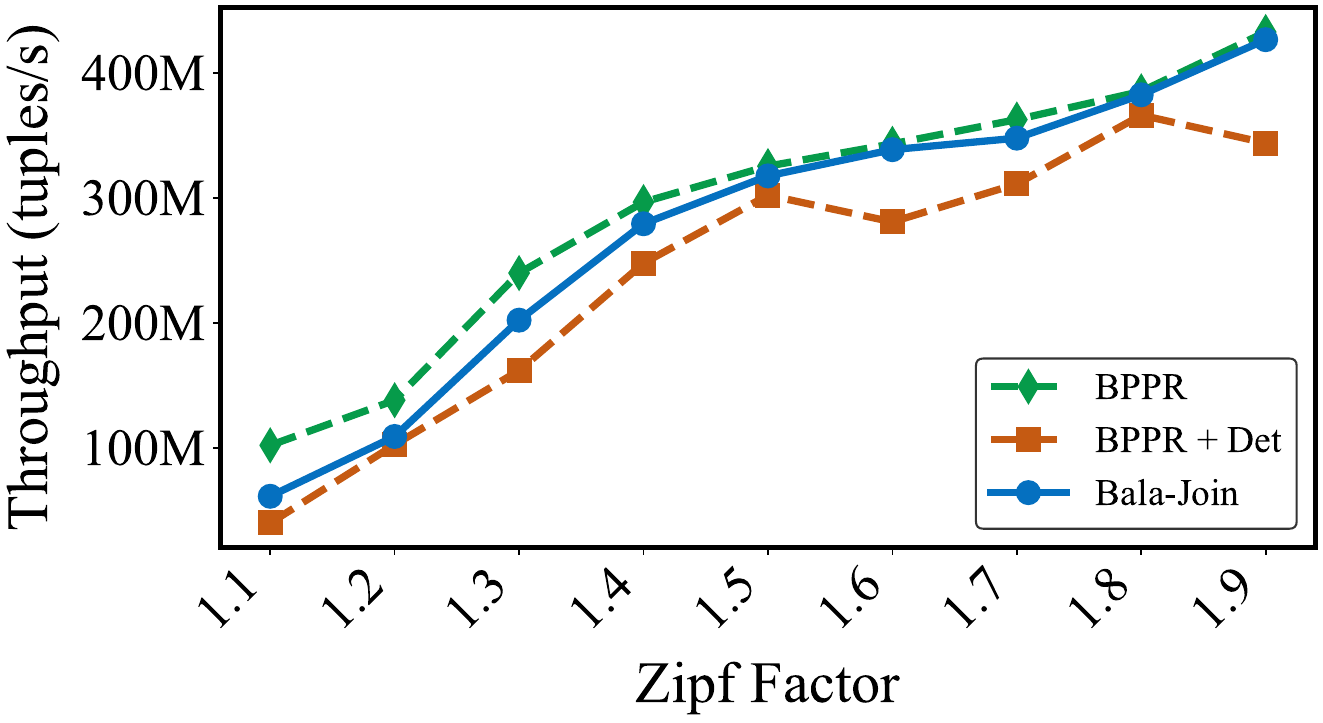}
        }\\
    \caption{Cost of the runtime skew value detector}
    \label{fig:phase2}\vspace{0ex}
\end{figure*}

In this section, we evaluate the performance of the distributed data skew detector proposed in \cref{sec:det}. Since this strategy is tightly integrated with BPPR, forming Bala-Join, we compare it with two other schemes for fairness:

\eat{\textbf{BPPR (without skew detector)}: This version directly provides skewed keys to the BPPR, allowing us to observe any overhead introduced by the distributed runtime detector.}
\textbf{BPPR (without skew detector)}: This version provides skewed values to BPPR in advance to observe any overhead introduced by the distributed runtime detector, therefore, removing the detector eliminates detection cost and leads to higher throughput.

\textbf{BPPR with an independent data skew detector (BPPR+Det)}: This employs the independent Space Saving detection algorithm \cite{cafaro2016parallel} alongside BPPR.

By comparing these three schemes, we can evaluate the overhead introduced by the distributed detector. Since all systems use BPPR for skewed data processing, there is no difference in network overhead, and the comparison focuses solely on throughput. For clarity, we abbreviate the schemes as BPPR, BPPR+Det, and Bala-Join (the solution introduced in this paper) in subsequent experiments.

\cref{fig:phase2-bandwidth3} and \cref{fig:phase2-bandwidth6} show the throughput of different strategies in 3-node and 6-node clusters across various bandwidths. In both cases, Bala-Join performs between BPPR and BPPR+Det. We also examine performance across different Zipf factors (\cref{fig:phase2-zipf3} and \cref{fig:phase2-zipf6}) and $ |R|/|S| $ ratios (\cref{fig:phase2-rsratio3} and \cref{fig:phase2-rsratio6}). 

In all experiments, the Bala-Join curve consistently falls between BPPR and BPPR+Det, indicating that the distributed detector introduces some overhead but performs significantly better than the independent Space-Saving detector used in BPPR+Det. This supports the challenges discussed in \cref{sec:det}. On average, the proposed detector adds about 5\% overhead compared to BPPR, with the 6-node setup generally incurring more overhead than the 3-node setup.

\cref{table:exp2} summarizes the average throughput per second for the three strategies across all scenarios.

\begin{table}[!t]
\centering
\caption{Average throughput and comparison of different algorithms over all experiments in \cref{ssec53}}
\label{table:exp2}
\small
\resizebox{0.90\linewidth}{!}{
\begin{tabular}{c|c|c|c}
\hline
 & Bala-Join & BPPR+Det & BPPR \\\hline
Bandwidth (3 nodes) & $ 178M $ & $ 172M(+3\%) $ & $ 186M(-4\%) $ \\
Zipf factor (3 nodes) & $ 59M $ & $ 55M(+7\%) $ & $ 61M(-8\%) $ \\
$ |R|/|S| $ (3 nodes) & $ 194M $ & $ 179M(+8\%) $ & $ 199M(-3\%) $\\
Bandwidth (6 nodes) & $ 221M $ & $ 200M(+11\%) $ & $ 230M(-4\%) $ \\
Zipf factor (6 nodes) & $ 274M $ & $ 240M(+14\%) $ & $ 292M(-6\%) $ \\
$ |R|/|S| $ (6 nodes) & $ 235M $ & $ 205M(+15\%) $ & $ 253M(-7\%) $\\
\hline
\end{tabular}
}
\end{table}

\subsection{The overall Dist-HJ Solution}\label{ssec54}

\begin{figure*}
    \centering
    \subfloat[Throughput vs. Zipf factor]{
    		\label{fig:phase3-zipf-t3}
        \includegraphics[width=0.22\linewidth]{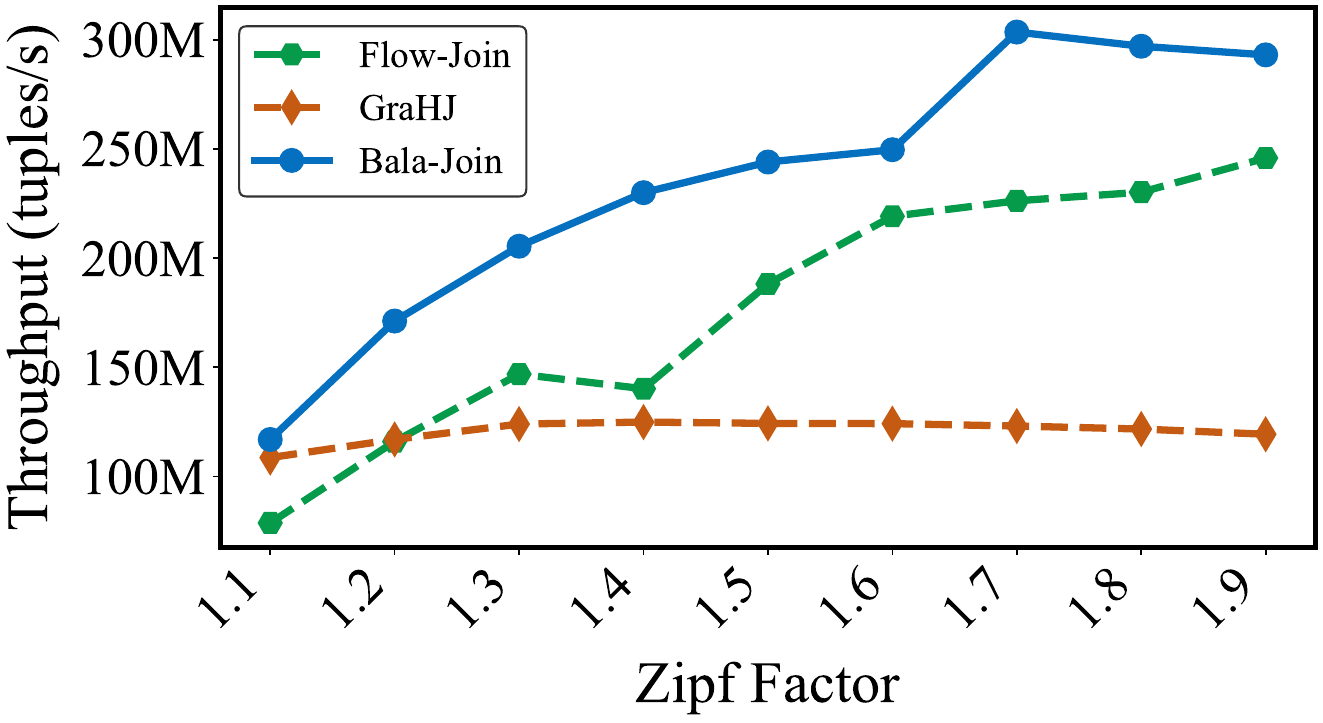}
        }\hfill
    \subfloat[Network overhead vs. Zipf factor]{
    		\label{fig:phase3-zipf-n3}
        \includegraphics[width=0.22\linewidth]{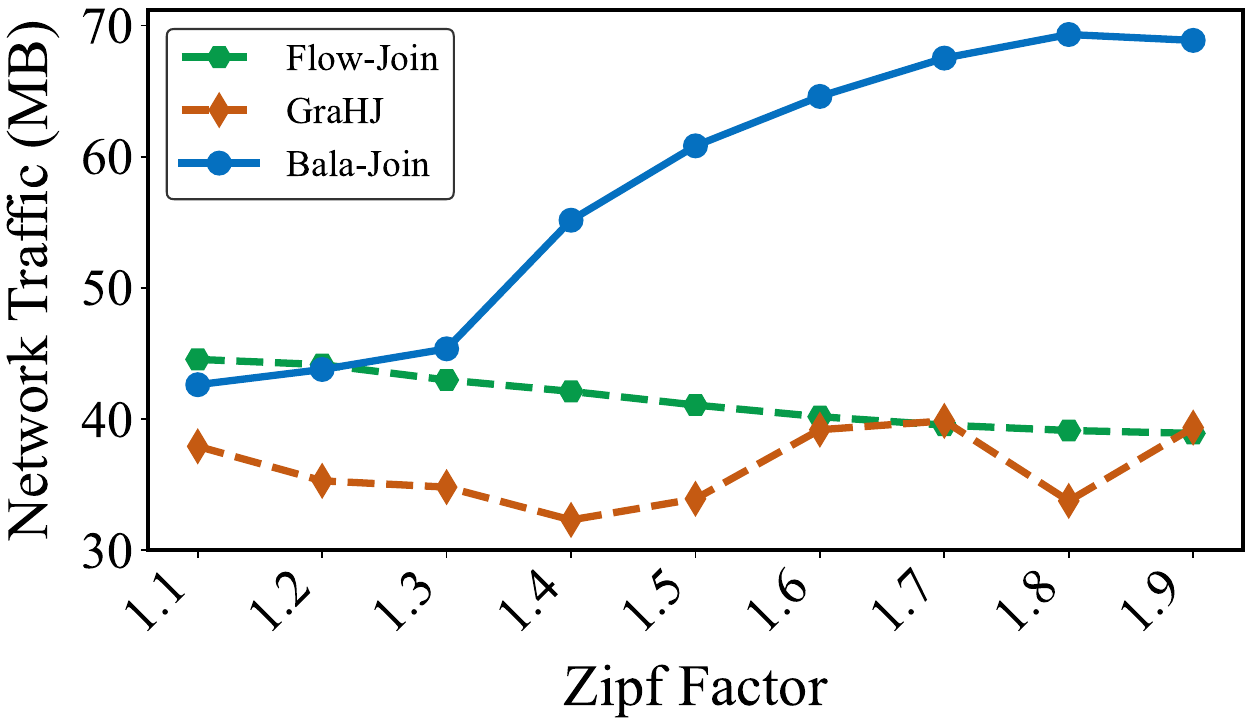}
        }\hfill
    \subfloat[Throughput vs $ |R|/|S| $]{
    		\label{fig:phase3-rsratio-t3}
        \includegraphics[width=0.22\linewidth]{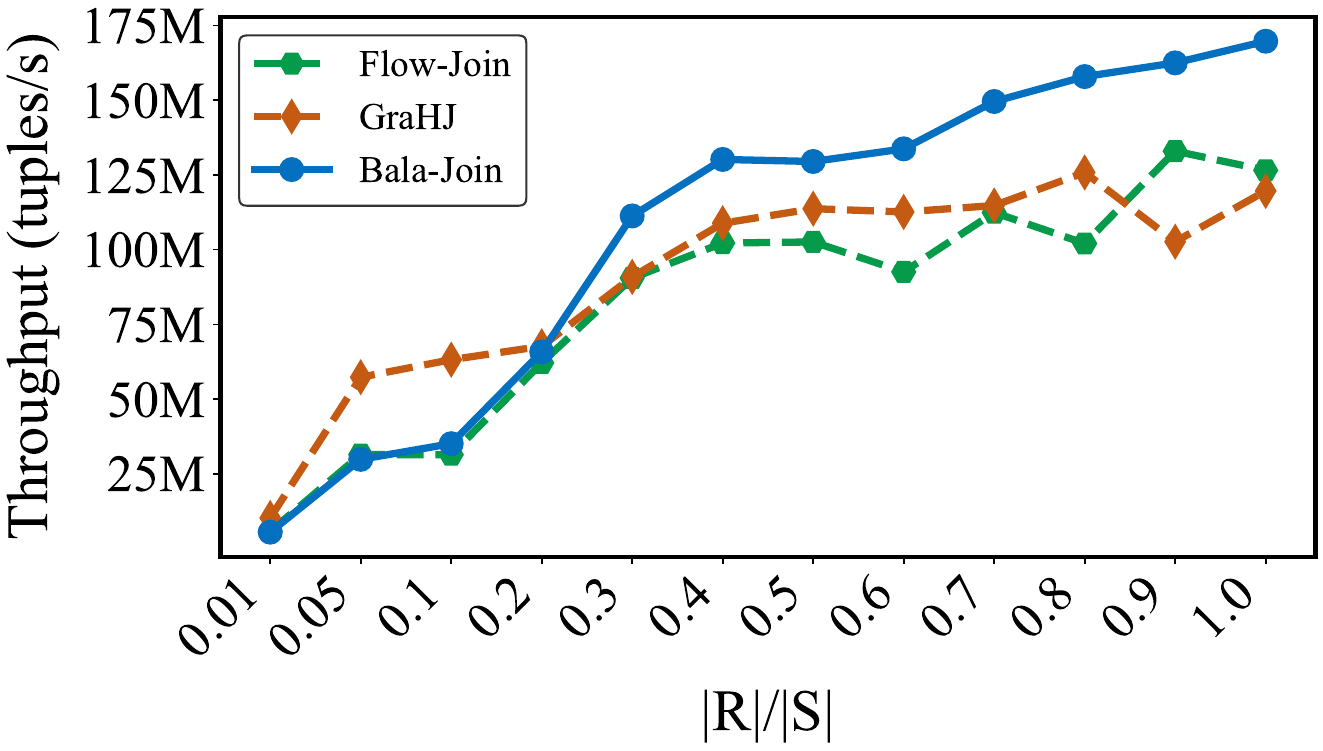}
        }\hfill
    \subfloat[Network overhead vs.  $ |R|/|S| $]{
    		\label{fig:phase3-rsratio-n3}
        \includegraphics[width=0.22\linewidth]{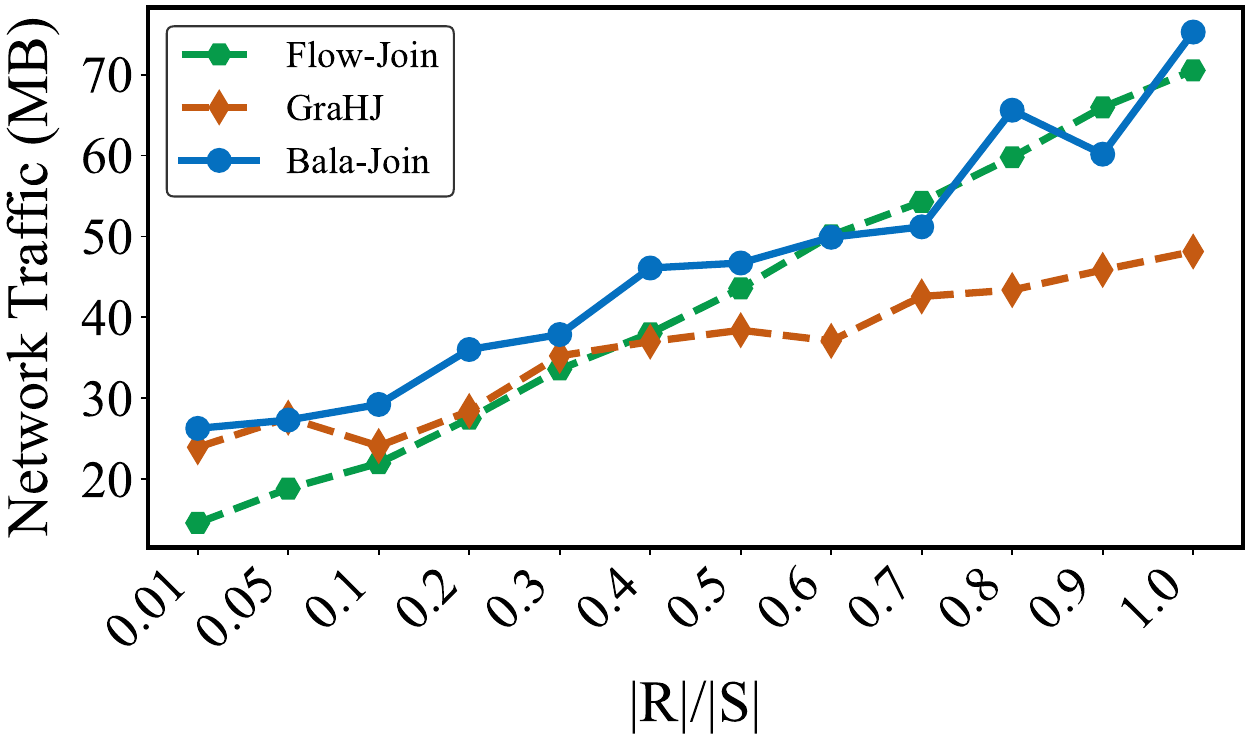}
        }\hfill 
    \subfloat[Throughput vs. Zipf factor]{
    		\label{fig:phase3-zipf-t6}
        \includegraphics[width=0.22\linewidth]{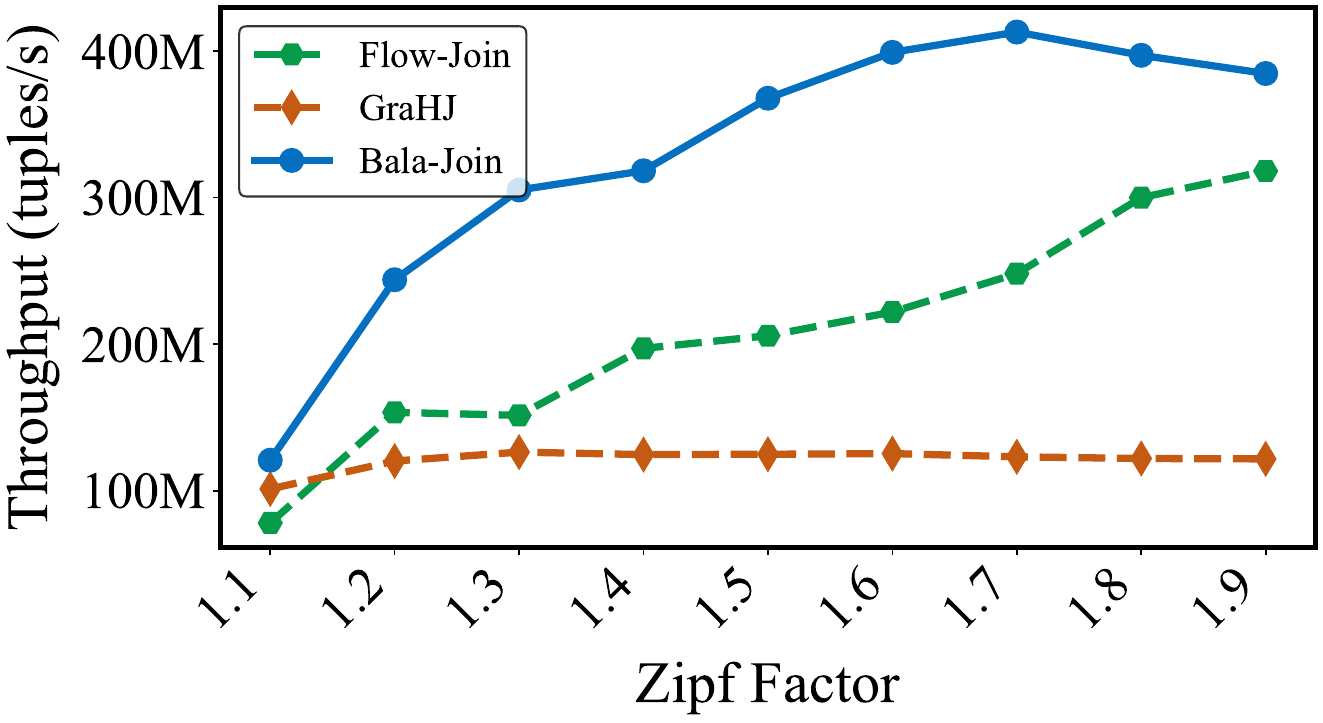}
        }\hfill
    \subfloat[Network overhead vs. Zipf factor]{
    		\label{fig:phase3-zipf-n6}
        \includegraphics[width=0.22\linewidth]{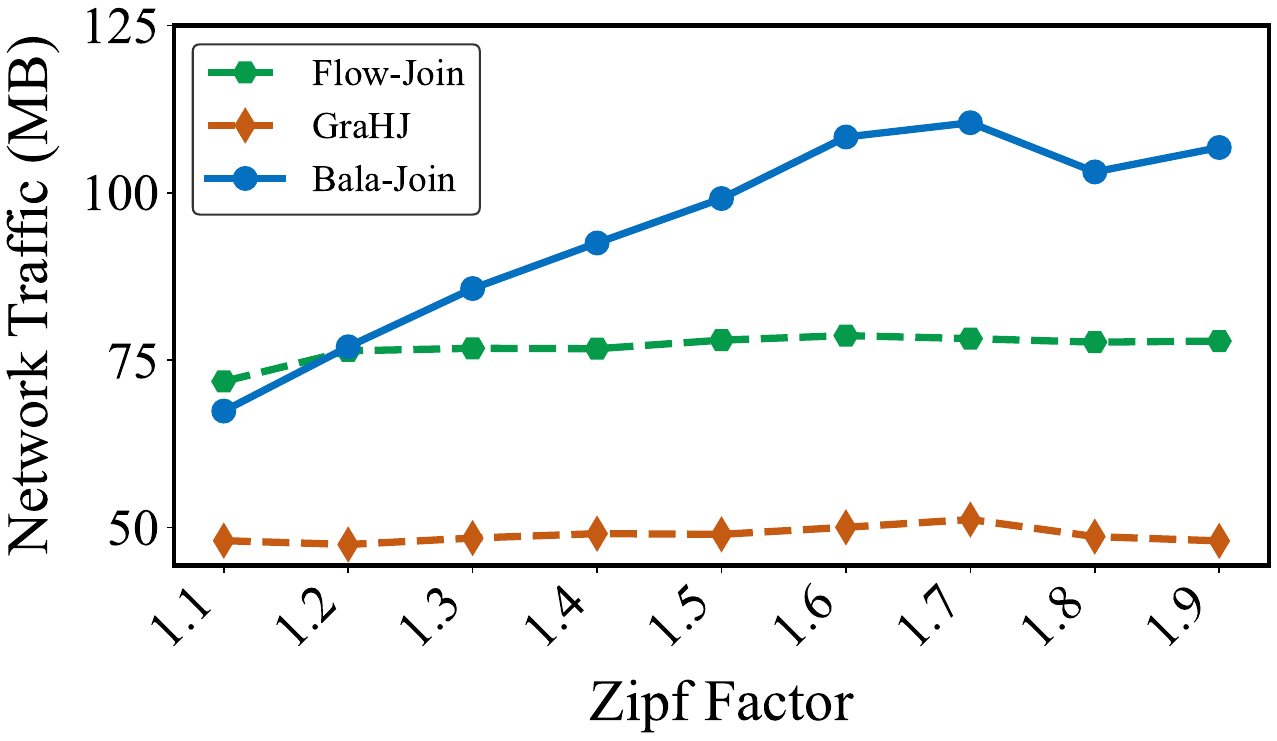}
        }\hfill
    \subfloat[Throughput vs $ |R|/|S| $]{
    		\label{fig:phase3-rsratio-t6}
        \includegraphics[width=0.22\linewidth]{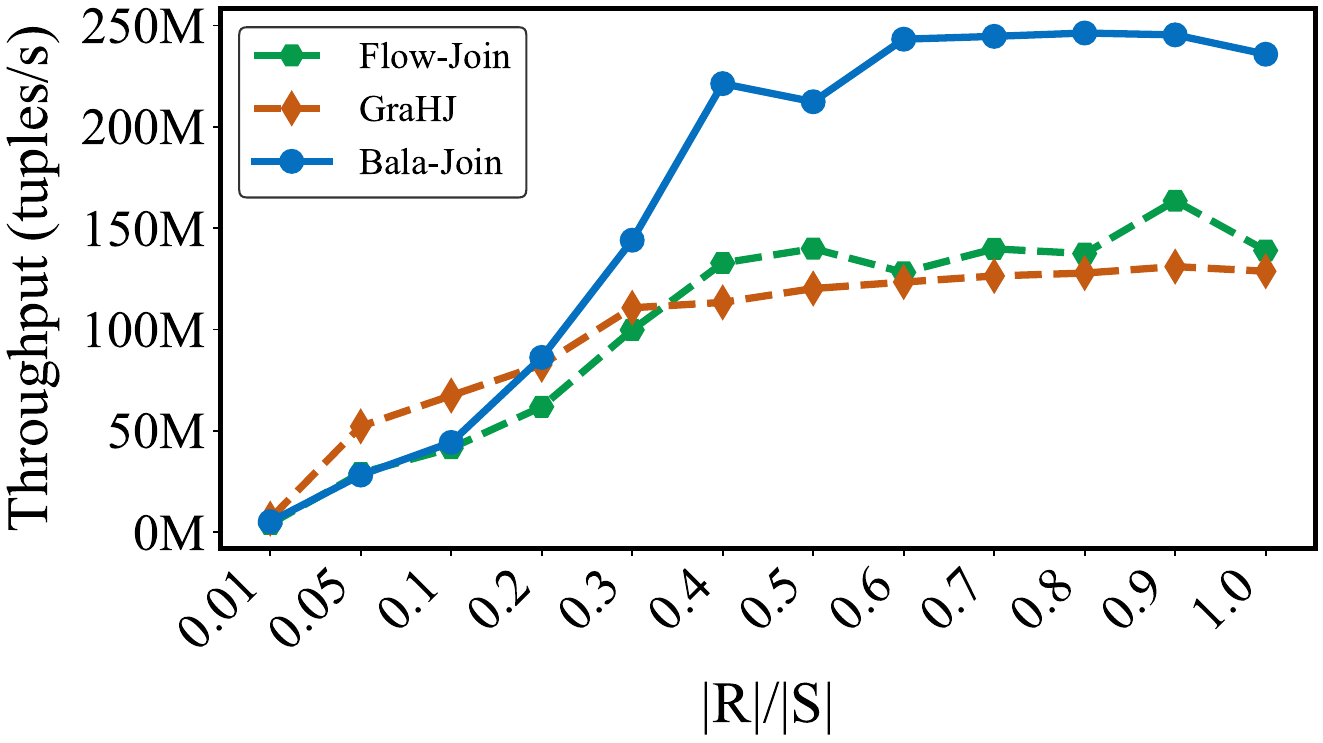}
        }\hfill
    \subfloat[Network overhead vs.  $ |R|/|S| $]{
    		\label{fig:phase3-rsratio-n6}
        \includegraphics[width=0.22\linewidth]{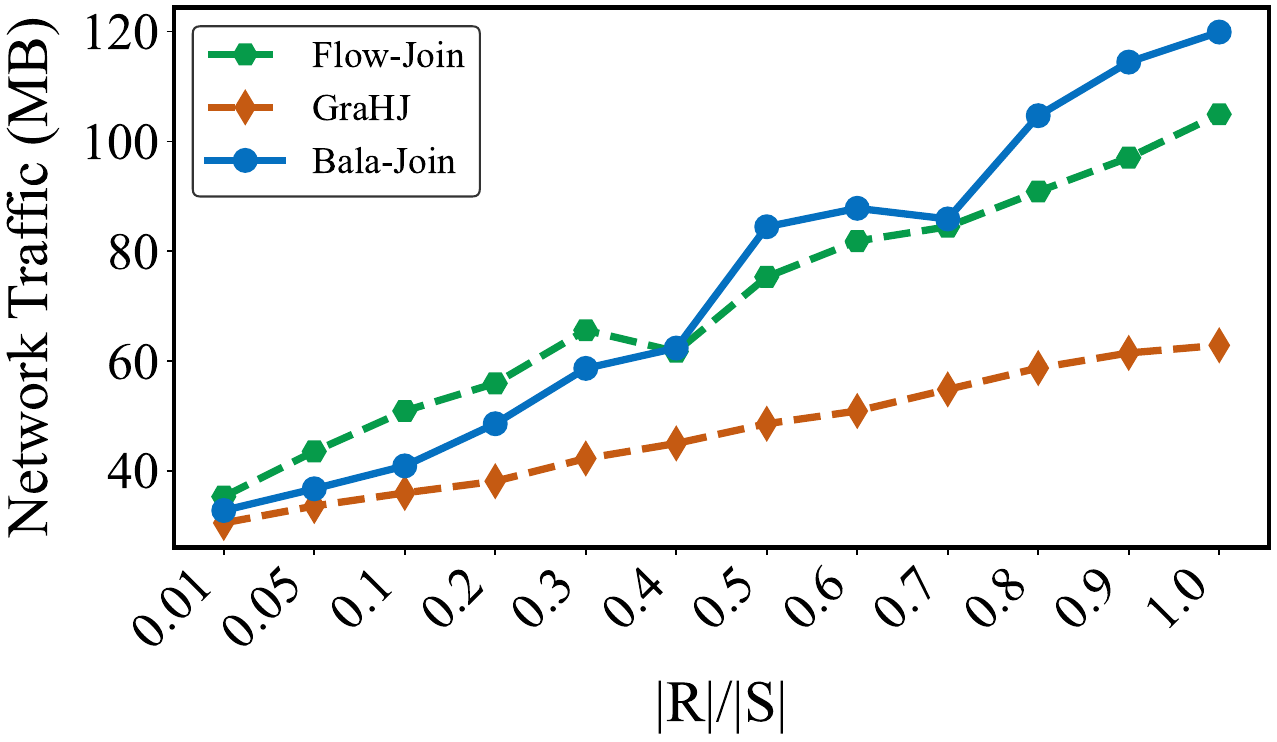}
        }  
    \caption{Comparison of \textsf{Dist-HJ} solutions in different cluster settings: (a-d) 3 nodes; (e-h) 6 nodes}
    \label{fig:phase3-6}
\end{figure*}
\begin{figure}
    \centering
    \subfloat[Throughput vs. bandwidth (3 nodes)]{
        \label{fig:phase3-bandwidth3}
        \includegraphics[width=0.74\linewidth]{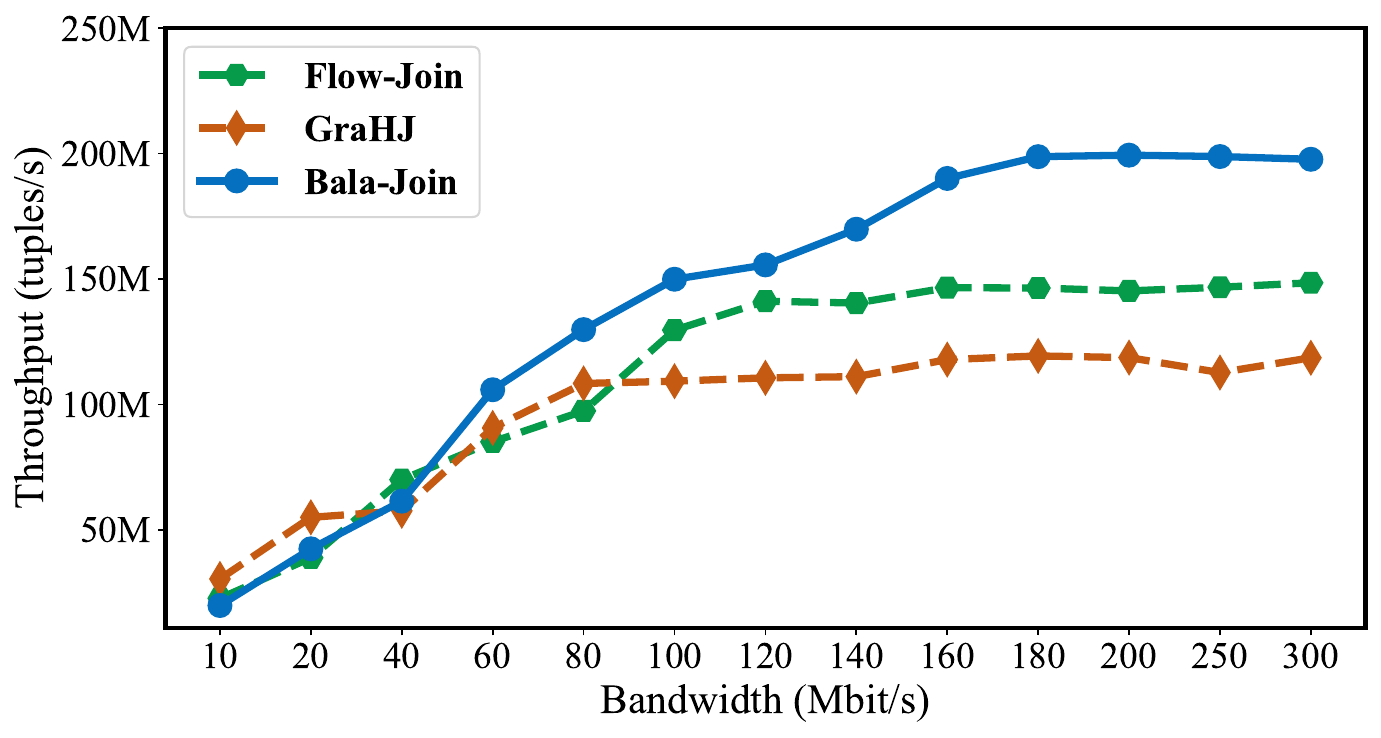}
    }\\
    
    \subfloat[Throughput vs. bandwidth (6 nodes)]{
        \label{fig:phase3-bandwidth6}
        \includegraphics[width=0.74\linewidth]{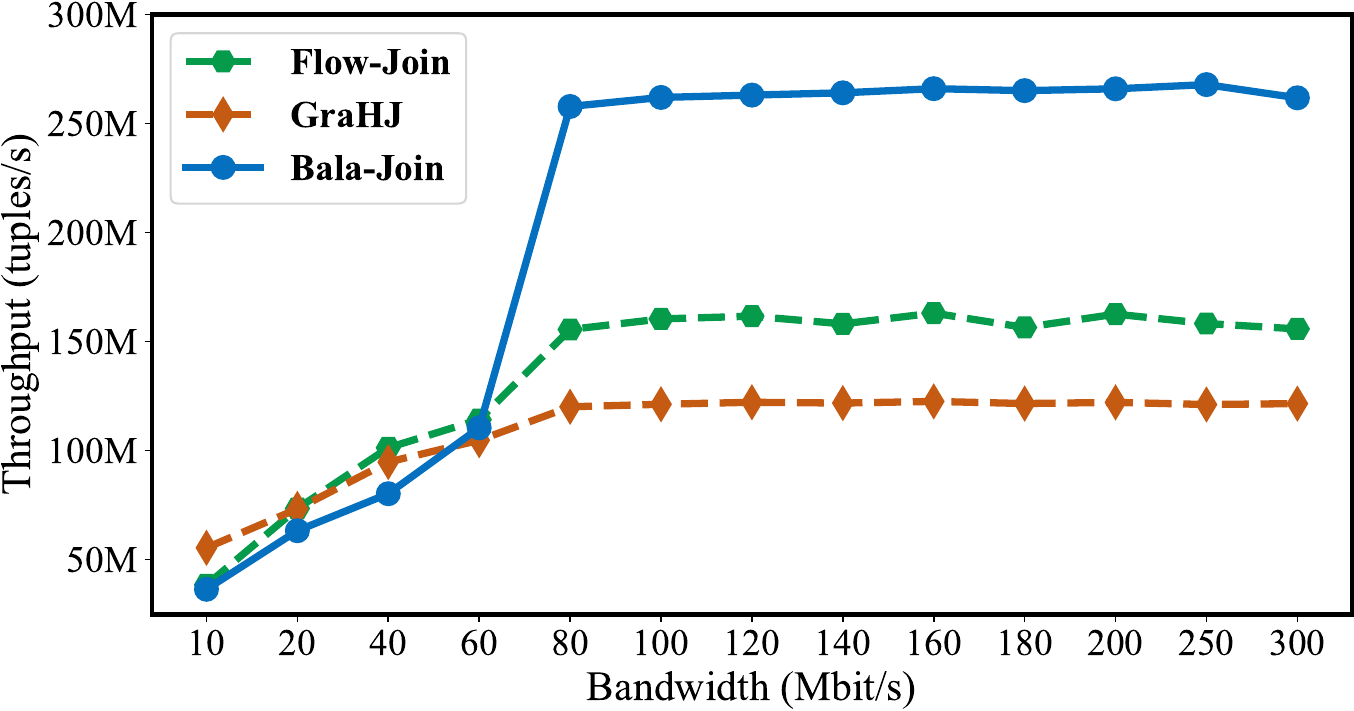}
    }
    \caption{Throughput vs. bandwidth for \textsf{Dist-HJ} solutions}
    \label{fig:phase3-3}
\end{figure}

\begin{figure}
    \centering
    
    \subfloat[LO\_ORDERDATE = D\_DATEKEY]{
        \label{fig:LD}
        \includegraphics[width=0.79\linewidth]{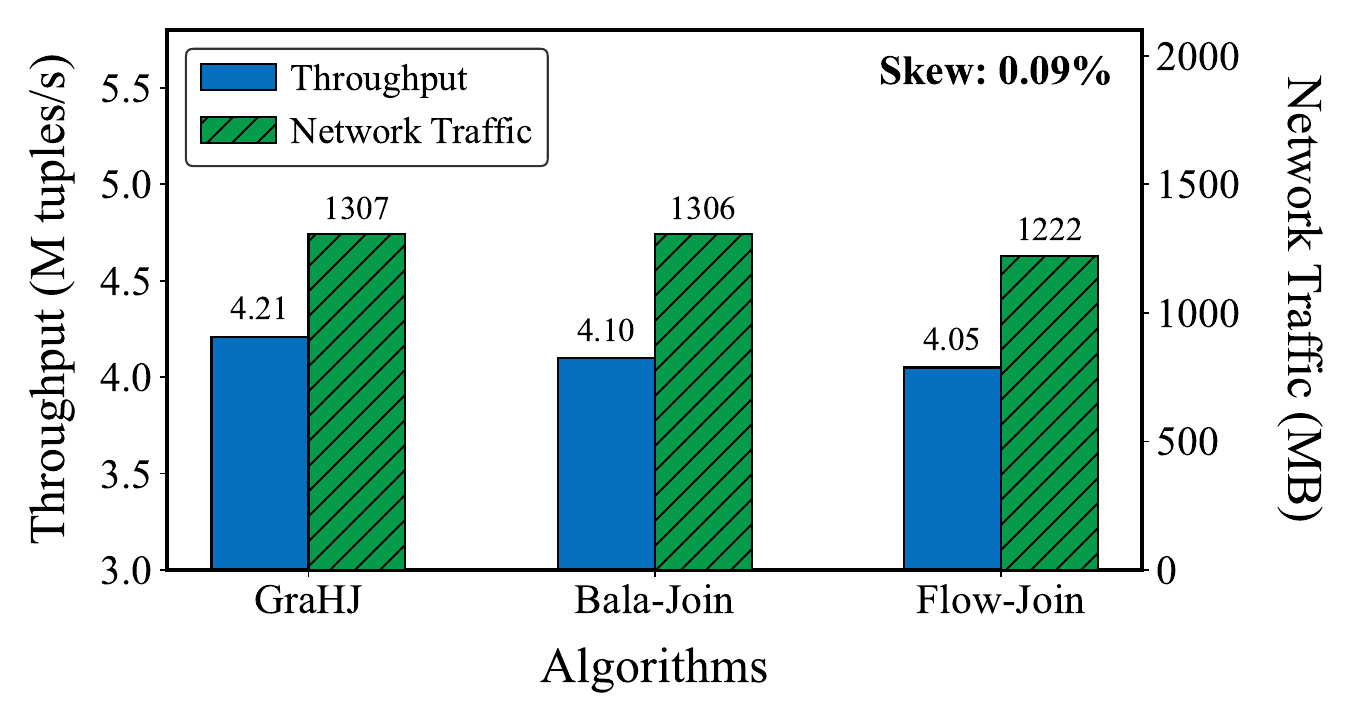}
    }\\
    \vspace{-1ex}
    
    \subfloat[LO\_SUPPKEY = S\_SUPPKEY]{
        \label{LS}
        \includegraphics[width=0.79\linewidth]{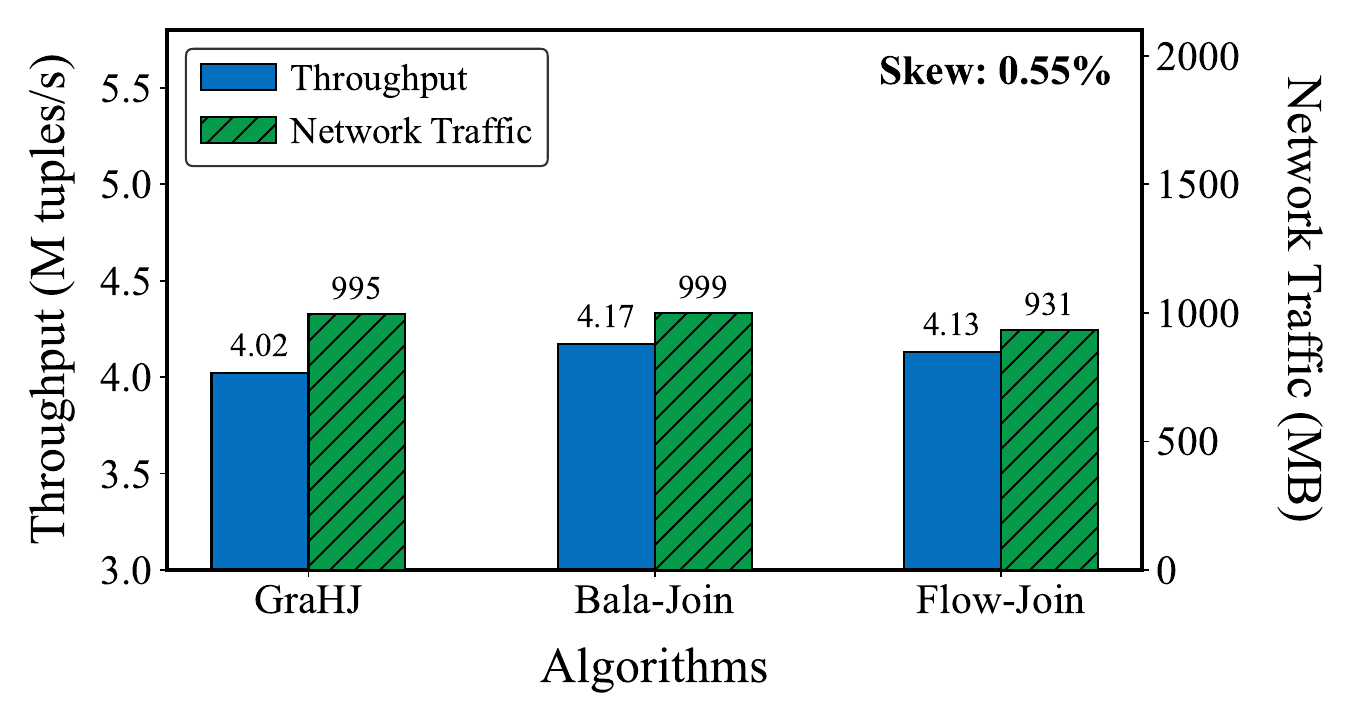}
    }
    \caption{Performance of \textsf{Dist-HJ} solutions on SSB-skew}
    \label{fig:ssb}
\end{figure}

Finally, we evaluate the complete \textsf{Dist-HJ} solution, including the entire pipeline of data skew detection, redistribution, and computation. \cref{fig:phase3-6} and \cref{fig:phase3-3} compare the performance of different solutions in 3-node and 6-node clusters, respectively. \cref{table:exp3} summarizes the average throughput per second for each solution in various scenarios, together with the performance improvements of our solution compared to \textsf{Flow-Join} and \textsf{GraHJ}.

\cref{fig:ssb} presents the results for the top two join operations, which appear most frequently in the SSB-skew benchmark, evaluated on a 24-node cluster (similar trends were observed on other cluster scales).

Both join operations involve tables with varying degrees of data skew. For the join ``LO\_ORDERDATE = D\_DATEKEY", the most frequent join key occurs at a rate of 0.09\%, while for ``LO\_SUPPKEY = S\_SUPPKEY", it occurs at 0.55\%. As the skew level increases, Bala-Join demonstrates superior performance compared to GraHJ and Flow-Join. Notably, GraHJ outperforms Bala-Join in case (a). This is because, in the SSB-skew benchmark, the same join key (LO\_ORDERDATE) is clustered together. As a result, the detector may prematurely classify early-arriving tuples as skewed, leading to unnecessary partitioning. This insight is valuable for the practical deployment of data skew detection mechanisms.

In all experimental scenarios, our comprehensive solution consistently outperforms \textsf{Flow-Join} in terms of throughput. However, it does not have a clear advantage over \textsf{Flow-Join} in terms of network overhead. Furthermore, \textsf{Flow-Join}, with its SFR strategy for skewness processing, shows a trend of reducing network overhead as the number of nodes increases from 3 to 6. However, our results demonstrate that our solution strikes a better balance between the network and the computational performance. Despite the higher network overhead, our algorithm achieves significantly better execution efficiency.

\begin{table}[h!]
\centering
\caption{Average throughput and comparison of different \textsf{Dist-HJ} implementations over all experiments in \cref{ssec54}}
\label{table:exp3}
\resizebox{\linewidth}{!}{
\begin{tabular}{c|c|c|c}
\hline
 & \textsf{Bala-Join} & \textsf{Flow-Join} & \textsf{GraHJ} \\\hline
Bandwidth (3 nodes) & $ \mathbf{140M} $ & $ 112M(+25\%) $ & $ 97M(+44\%) $ \\
Zipf factor (3 nodes) & $ \mathbf{235M} $ & $ 177M(+33\%) $ & $ 121M(+94\%) $ \\
$ |R|/|S| $ (3 nodes)  & $ \mathbf{107M} $ & $ 83M(+29\%) $ & $ 91M(+18\%) $\\
Bandwidth (6 nodes) & $ \mathbf{205M} $ & $ 135M(+52\%) $ & $ 109M(+88\%) $ \\
Zipf factor (6 nodes) & $ \mathbf{328M} $ & $ 208M(+58\%) $ & $ 121M(+171\%) $ \\
$ |R|/|S| $ (6 nodes) & $ \mathbf{163M} $ & $ 101M(+61\%) $ & $ 99M(+65\%) $\\
\hline
\end{tabular}
}
\end{table}

\section{CONCLUSION}
\label{sec:conc}
The efficiency of Distributed Hash Join (\textsf{Dist-HJ}) in geo-distributed databases, such as CockroachDB, is severely impeded by data skew, particularly over Wide Area Networks (WANs), posing a critical industrial challenge. 
In this paper, we propose Bala-Join, an adaptive solution designed to mitigate this specific performance bottleneck. 
The core of our solution, the BPPR algorithm, employs a controlled multicast mechanism to judiciously balance computational load distribution against network overhead. 
Subject to a rigorous balance factor, BPPR performs fine-grained partitioning on skewed tuples, thereby achieving a bounded load balance while minimizing the multicast scope of the build table.
This algorithm is tightly integrated with an online skew detector, which utilizes the Active-Signaling and Asynchronous-Pulling mechanism to facilitate efficient, real-time processing of data streams. 
Our empirical evaluation demonstrates that Bala-Join achieves a superior balance between network and computational performance, yielding throughput gains of 25\%-61\% and confirming its practical viability for real-world deployments.

\bibliographystyle{IEEEtran}
\bibliography{IEEEabrv,ref}

\begin{thebibliography}{10}
\providecommand{\url}[1]{#1}
\csname url@samestyle\endcsname
\providecommand{\newblock}{\relax}
\providecommand{\bibinfo}[2]{#2}
\providecommand{\BIBentrySTDinterwordspacing}{\spaceskip=0pt\relax}
\providecommand{\BIBentryALTinterwordstretchfactor}{4}
\providecommand{\BIBentryALTinterwordspacing}{\spaceskip=\fontdimen2\font plus
\BIBentryALTinterwordstretchfactor\fontdimen3\font minus \fontdimen4\font\relax}
\providecommand{\BIBforeignlanguage}[2]{{%
\expandafter\ifx\csname l@#1\endcsname\relax
\typeout{** WARNING: IEEEtran.bst: No hyphenation pattern has been}%
\typeout{** loaded for the language `#1'. Using the pattern for}%
\typeout{** the default language instead.}%
\else
\language=\csname l@#1\endcsname
\fi
#2}}
\providecommand{\BIBdecl}{\relax}
\BIBdecl

\bibitem{stonebraker1986case}
M.~Stonebraker, ``The case for shared nothing,'' \emph{IEEE Database Eng. Bull.}, vol.~9, no.~1, pp. 4--9, 1986.

\bibitem{taft2020cockroachdb}
R.~Taft, I.~Sharif, A.~Matei, N.~VanBenschoten, J.~Lewis, T.~Grieger, K.~Niemi, A.~Woods, A.~Birzin, R.~Poss \emph{et~al.}, ``Cockroachdb: The resilient geo-distributed sql database,'' in \emph{Proceedings of the 2020 ACM SIGMOD International Conference on Management of Data}, 2020, pp. 1493--1509.

\bibitem{huang2020tidb}
D.~Huang, Q.~Liu, Q.~Cui, Z.~Fang, X.~Ma, F.~Xu, L.~Shen, L.~Tang, Y.~Zhou, M.~Huang \emph{et~al.}, ``Tidb: a raft-based htap database,'' \emph{Proceedings of the VLDB Endowment}, vol.~13, no.~12, pp. 3072--3084, 2020.

\bibitem{corbett2013spanner}
J.~C. Corbett, J.~Dean, M.~Epstein, A.~Fikes, C.~Frost, J.~J. Furman, S.~Ghemawat, A.~Gubarev, C.~Heiser, P.~Hochschild \emph{et~al.}, ``Spanner: Google’s globally distributed database,'' \emph{ACM Transactions on Computer Systems (TOCS)}, vol.~31, no.~3, pp. 1--22, 2013.

\bibitem{yang2022oceanbase}
Z.~Yang, C.~Yang, F.~Han, M.~Zhuang, B.~Yang, Z.~Yang, X.~Cheng, Y.~Zhao, W.~Shi, H.~Xi \emph{et~al.}, ``Oceanbase: a 707 million tpmc distributed relational database system,'' \emph{Proceedings of the VLDB Endowment}, vol.~15, no.~12, pp. 3385--3397, 2022.

\bibitem{dewitt1992practical}
D.~J. DeWitt, J.~F. Naughton, D.~A. Schneider, and S.~Seshadri, ``Practical skew handling in parallel joins,'' University of Wisconsin-Madison Department of Computer Sciences, Tech. Rep., 1992.

\bibitem{metwally2005efficient}
A.~Metwally, D.~Agrawal, and A.~El~Abbadi, ``Efficient computation of frequent and top-k elements in data streams,'' in \emph{Database Theory-ICDT 2005: 10th International Conference, Edinburgh, UK, January 5-7, 2005. Proceedings 10}.\hskip 1em plus 0.5em minus 0.4em\relax Springer, 2005, pp. 398--412.

\bibitem{xu2008handling}
Y.~Xu, P.~Kostamaa, X.~Zhou, and L.~Chen, ``Handling data skew in parallel joins in shared-nothing systems,'' in \emph{Proceedings of the 2008 ACM SIGMOD international conference on Management of data}, 2008, pp. 1043--1052.

\bibitem{barthels2015rack}
C.~Barthels, S.~Loesing, G.~Alonso, and D.~Kossmann, ``Rack-scale in-memory join processing using rdma,'' in \emph{Proceedings of the 2015 ACM SIGMOD International Conference on Management of Data}, 2015, pp. 1463--1475.

\bibitem{rodiger2016flow}
W.~R{\"o}diger, S.~Idicula, A.~Kemper, and T.~Neumann, ``Flow-join: Adaptive skew handling for distributed joins over high-speed networks,'' in \emph{2016 IEEE 32nd International Conference on Data Engineering (ICDE)}.\hskip 1em plus 0.5em minus 0.4em\relax IEEE, 2016, pp. 1194--1205.

\bibitem{polychroniou2014track}
O.~Polychroniou, R.~Sen, and K.~A. Ross, ``Track join: distributed joins with minimal network traffic,'' in \emph{Proceedings of the 2014 ACM SIGMOD international conference on Management of data}, 2014, pp. 1483--1494.

\bibitem{yang2023one}
J.~Yang, H.~Li, Y.~Si, H.~Zhang, K.~Zhao, K.~Wei, W.~Song, Y.~Liu, and J.~Cui, ``One size cannot fit all: a self-adaptive dispatcher for skewed hash join in shared-nothing rdbmss,'' in \emph{Proceedings Of The 29th International Conference on Database Systems For Advanced Applications}, 2024.

\bibitem{pu2015low}
Q.~Pu, G.~Ananthanarayanan, P.~Bodik, S.~Kandula, A.~Akella, P.~Bahl, and I.~Stoica, ``Low latency geo-distributed data analytics,'' \emph{ACM SIGCOMM Computer Communication Review}, vol.~45, no.~4, pp. 421--434, 2015.

\bibitem{pradhan2024optimal}
A.~Pradhan, S.~Karthik, and R.~S, ``Optimal query plans for geo-distributed data analytics at scale,'' in \emph{Proceedings of the 7th Joint International Conference on Data Science \& Management of Data (11th ACM IKDD CODS and 29th COMAD)}, 2024, pp. 247--251.

\bibitem{kitsuregawa1990bucket}
M.~Kitsuregawa and Y.~Ogawa, ``Bucket spreading parallel hash: A new, robust, parallel hash join method for data skew in the super database computer (sdc).'' in \emph{VLDB}, 1990, pp. 210--221.

\bibitem{vitorovic2016load}
A.~Vitorovic, M.~Elseidy, and C.~Koch, ``Load balancing and skew resilience for parallel joins,'' in \emph{2016 IEEE 32nd International Conference on Data Engineering (ICDE)}.\hskip 1em plus 0.5em minus 0.4em\relax Ieee, 2016, pp. 313--324.

\bibitem{gibbons1998new}
P.~B. Gibbons and Y.~Matias, ``New sampling-based summary statistics for improving approximate query answers,'' in \emph{Proceedings of the 1998 ACM SIGMOD international conference on Management of data}, 1998, pp. 331--342.

\bibitem{manku2002approximate}
G.~S. Manku and R.~Motwani, ``Approximate frequency counts over data streams,'' in \emph{VLDB'02: Proceedings of the 28th International Conference on Very Large Databases}.\hskip 1em plus 0.5em minus 0.4em\relax Elsevier, 2002, pp. 346--357.

\bibitem{demaine2002frequency}
E.~D. Demaine, A.~L{\'o}pez-Ortiz, and J.~I. Munro, ``Frequency estimation of internet packet streams with limited space,'' in \emph{Esa}, vol.~2.\hskip 1em plus 0.5em minus 0.4em\relax Citeseer, 2002, pp. 348--360.

\bibitem{charikar2004finding}
M.~Charikar, K.~Chen, and M.~Farach-Colton, ``Finding frequent items in data streams,'' \emph{Theoretical Computer Science}, vol. 312, no.~1, pp. 3--15, 2004.

\bibitem{cormode2005improved}
G.~Cormode and S.~Muthukrishnan, ``An improved data stream summary: the count-min sketch and its applications,'' \emph{Journal of Algorithms}, vol.~55, no.~1, pp. 58--75, 2005.

\bibitem{cafaro2016parallel}
M.~Cafaro, M.~Pulimeno, and P.~Tempesta, ``A parallel space saving algorithm for frequent items and the hurwitz zeta distribution,'' \emph{Information Sciences}, vol. 329, pp. 1--19, 2016.

\bibitem{zhao2023double}
Y.~Zhao, W.~Han, Z.~Zhong, Y.~Zhang, T.~Yang, and B.~Cui, ``Double-anonymous sketch: Achieving top-k-fairness for finding global top-k frequent items,'' \emph{Proceedings of the ACM on Management of Data}, vol.~1, no.~1, pp. 1--26, 2023.

\bibitem{walton1991taxonomy}
C.~B. Walton, A.~G. Dale, and R.~M. Jenevein, ``A taxonomy and performance model of data skew effects in parallel joins,'' in \emph{Proceedings of the 17th International Conference on Very Large Data Bases}, 1991, pp. 537--548.

\bibitem{hua1991handling}
K.~A. Hua and C.~Lee, ``Handling data skew in multiprocessor database computers using partition tuning.'' in \emph{VLDB}, vol.~91, 1991, pp. 525--535.

\bibitem{alsabti2001skew}
K.~Alsabti and S.~Ranka, ``Skew-insensitive parallel algorithms for relational join,'' \emph{Journal of King Saud University-Computer and Information Sciences}, vol.~13, pp. 79--110, 2001.

\bibitem{wolf1991effective}
J.~L. Wolf, D.~M. Dias, and P.~S. Yu, ``An effective algorithm for parallelizing hash joins in the presence of data skew,'' in \emph{Proceedings. Seventh International Conference on Data Engineering}.\hskip 1em plus 0.5em minus 0.4em\relax IEEE Computer Society, 1991, pp. 200--201.

\bibitem{wolf1993parallel}
------, ``A parallel sort merge join algorithm for managing data skew,'' \emph{IEEE Transactions on Parallel and Distributed Systems}, vol.~4, no.~1, pp. 70--86, 1993.

\bibitem{wolf1994new}
J.~L. Wolf, D.~M. Dias, P.~S. Yu, and J.~Turek, ``New algorithms for parallelizing relational database joins in the presence of data skew,'' \emph{IEEE Transactions on Knowledge and Data Engineering}, vol.~6, no.~6, pp. 990--997, 1994.

\bibitem{dewan1994predictive}
H.~M. Dewan, K.~W. Mok, M.~Hern{\'a}ndez, and S.~J. Stolfo, ``Predictive dynamic load balancing of parallel hash-joins over heterogeneous processors in the presence of data skew,'' in \emph{Proceedings of 3rd International Conference on Parallel and Distributed Information Systems}.\hskip 1em plus 0.5em minus 0.4em\relax IEEE, 1994, pp. 40--49.

\bibitem{graham1969bounds}
R.~L. Graham, ``Bounds on multiprocessing timing anomalies,'' \emph{SIAM journal on Applied Mathematics}, vol.~17, no.~2, pp. 416--429, 1969.

\bibitem{coffman1978application}
E.~G. Coffman, Jr, M.~R. Garey, and D.~S. Johnson, ``An application of bin-packing to multiprocessor scheduling,'' \emph{SIAM Journal on Computing}, vol.~7, no.~1, pp. 1--17, 1978.

\bibitem{shatdal1993using}
A.~Shatdal and J.~F. Naughton, ``Using shared virtual memory for parallel join processing,'' in \emph{Proceedings of the 1993 ACM SIGMOD international conference on Management of data}, 1993, pp. 119--128.

\bibitem{kitsuregawa1995dynamic}
M.~Kitsuregawa, ``Dynamic join product skew handling for hash-joins in shared-nothing database systems,'' in \emph{Proceedings Of The Fourth International Conference on Database Systems For Advanced Applications}, vol.~5.\hskip 1em plus 0.5em minus 0.4em\relax World Scientific, 1995, p. 246.

\bibitem{zhou1995handling}
X.~Zhou and M.~E. Orlowska, ``Handling data skew in parallel hash join computation using two-phase scheduling,'' in \emph{Proceedings 1st International Conference on Algorithms and Architectures for Parallel Processing}, vol.~2.\hskip 1em plus 0.5em minus 0.4em\relax IEEE, 1995, pp. 527--536.

\bibitem{zhou2019fastjoin}
S.~Zhou, F.~Zhang, H.~Chen, H.~Jin, and B.~B. Zhou, ``Fastjoin: A skewness-aware distributed stream join system,'' in \emph{2019 IEEE International Parallel and Distributed Processing Symposium (IPDPS)}.\hskip 1em plus 0.5em minus 0.4em\relax IEEE, 2019, pp. 1042--1052.

\bibitem{gavagsaz2019load}
E.~Gavagsaz, A.~Rezaee, and H.~Haj Seyyed~Javadi, ``Load balancing in join algorithms for skewed data in mapreduce systems,'' \emph{The Journal of Supercomputing}, vol.~75, pp. 228--254, 2019.

\bibitem{kitsuregawa1983application}
M.~Kitsuregawa, H.~Tanaka, and T.~Moto-Oka, ``Application of hash to data base machine and its architecture,'' \emph{New Generation Computing}, vol.~1, no.~1, pp. 63--74, 1983.

\bibitem{huang2014ld}
Q.~Huang and P.~P. Lee, ``Ld-sketch: A distributed sketching design for accurate and scalable anomaly detection in network data streams,'' in \emph{IEEE INFOCOM 2014-IEEE Conference on Computer Communications}.\hskip 1em plus 0.5em minus 0.4em\relax IEEE, 2014, pp. 1420--1428.

\bibitem{stamos1993symmetric}
J.~W. Stamos and H.~C. Young, ``A symmetric fragment and replicate algorithm for distributed joins,'' \emph{IEEE Transactions on Parallel and Distributed Systems}, vol.~4, no.~12, pp. 1345--1354, 1993.

\bibitem{zipf1949human}
G.~K. Zipf, \emph{Human behavior and the principle of least effort: An introduction to human ecology}.\hskip 1em plus 0.5em minus 0.4em\relax Addison-Wesley, 1949.

\bibitem{rodiger2014locality}
W.~R{\"o}diger, T.~M{\"u}hlbauer, P.~Unterbrunner, A.~Reiser, A.~Kemper, and T.~Neumann, ``Locality-sensitive operators for parallel main-memory database clusters,'' in \emph{2014 IEEE 30th International Conference on Data Engineering}.\hskip 1em plus 0.5em minus 0.4em\relax IEEE, 2014, pp. 592--603.

\bibitem{justen2024polar}
D.~Justen, D.~Ritter, C.~Fraser, A.~Lamb, A.~Lee, T.~Bodner, M.~Y. Haddad, S.~Zeuch, V.~Markl, and M.~Boehm, ``Polar: Adaptive and non-invasive join order selection via plans of least resistance,'' \emph{Proceedings of the VLDB Endowment}, vol.~17, no.~6, pp. 1350--1363, 2024.

\bibitem{o2007star}
P.~E. O’Neil, E.~J. O’Neil, and X.~Chen, ``The star schema benchmark (ssb),'' \emph{Pat}, vol. 200, no.~0, p.~50, 2007.

\end{thebibliography}

\end{document}